\def\supp{0}
\def\arxiv{1}
\ifnum\arxiv=0
\documentclass[lettersize,journal]{IEEEtran}

\usepackage{amsmath,amsfonts,amssymb,amsthm}
\usepackage{array}
\usepackage[caption=false,font=normalsize,labelfont=sf,textfont=sf]{subfig}
\usepackage{textcomp}
\usepackage{stfloats}
\usepackage{url}
\usepackage{verbatim}
\usepackage{graphicx}
\usepackage{cite}
\usepackage{tcolorbox}
\usepackage{mathtools}
\usepackage{enumitem}
\usepackage{relsize}
\usepackage{soul}
\usepackage{color}
\usepackage{thmtools}
\usepackage{thm-restate}
\usepackage{hyperref}
\usepackage[capitalise]{cleveref}
\usepackage{algorithm}
\usepackage{algpseudocode}
\usepackage{xspace}
\usepackage{tikz}

\newtheorem{theorem}{Theorem}[section]
\newtheorem{lemma}[theorem]{Lemma}
\newtheorem{corollary}[theorem]{Corollary}
\newtheorem{definition}[theorem]{Definition}
\newtheorem{assumption}{Assumption}
\newtheorem{example}{Example}
\newtheorem{remark}{Remark}

\crefname{assumption}{Assumption}{Assumptions}
\else
\def\supp{0} 

\documentclass[11pt]{article}

\usepackage[a4paper,margin=1in]{geometry}
\usepackage[T1]{fontenc}
\usepackage[utf8]{inputenc}

\usepackage{amsmath,amsfonts,amssymb,amsthm}
\usepackage{array}
\usepackage{textcomp}
\usepackage{url}
\usepackage{verbatim}
\usepackage{graphicx}
\usepackage{cite}
\usepackage{tcolorbox}
\usepackage{mathtools}
\usepackage{enumitem}
\usepackage{relsize}
\usepackage{soul}
\usepackage{xcolor}
\usepackage{thmtools}
\usepackage{thm-restate}
\usepackage{hyperref}
\usepackage[capitalise]{cleveref}
\usepackage{algorithm}
\usepackage{algpseudocode}
\usepackage{xspace}
\usepackage{tikz}
\usepackage{subcaption}

\newtheorem{theorem}{Theorem}[section]
\newtheorem{lemma}[theorem]{Lemma}
\newtheorem{corollary}[theorem]{Corollary}
\newtheorem{definition}[theorem]{Definition}

\newtheorem{remark}{Remark}

\crefname{assumption}{Assumption}{Assumptions}

\title{Metastability-Containing Turing Machines}

\author{
Johannes Bund \and
Amir Leshem \and
Moti Medina
\thanks{All authors are with the Faculty of Engineering, Bar-Ilan University, Ramat Gan, Israel.
Johannes Bund was also affiliated with Universit\'{e} Paris-Saclay, CNRS, ENS Paris-Saclay, Laboratoire M\'{e}thodes Formelles, France, when parts of this work were carried out.
E-mail: bundjoh@biu.ac.il, amir.leshem@biu.ac.il, moti.medina@biu.ac.il.
Funding: This research was supported by the Israel Science Foundation under Grants 867/19 and 554/23. Amir Leshem was partially supported by ISF Grant 2197/22.}
}

\date{}
\fi

\newcommand{\moti}[1]{\hl{\textbf{M}: #1}}

\newcommand{\joh}[1]{\hl{\textbf{J}: #1}}

\newcommand{\al}[1]{\hl{\textbf{A}: #1}}

\DeclareMathOperator{\res}{res}

\newcommand{\IN}{\mathbb{N}}
\newcommand{\IR}{\mathbb{R}}
\newcommand{\IB}{\{0,1\}}
\newcommand{\IT}{\mathbb{T}}
\newcommand{\AND}{\textsf{and}}
\newcommand{\OR}{\textsf{or}}
\newcommand{\bin}[2]{\textrm{bin}_{#1}(#2)}
\newcommand{\NOT}{\textsf{not}}

\DeclareMathOperator{\eqdef}{\coloneqq}

\DeclareMathOperator{\NMUX}{(n:1)-\MUX}
\DeclareMathOperator{\NCMUX}{(n:1)-\CMUX}
\newcommand{\orcid}[1]{\href{https://orcid.org/#1}{ {\includegraphics[scale=0.5]{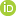}}}}
\newcommand{\mfu}{\ensuremath{\mathfrak{u}}}
\newcommand{\starr}{*}
\newcommand{\bigstarr}{\mathop{\raisebox{-.7pt}{\ensuremath{\mathlarger{\mathlarger{\mathlarger{*}}}}}}}

\newcommand{\blank}{\textnormal{\texttt{B}}}%
\newcommand{\leftt}{\textnormal{\texttt{L}}}%
\newcommand{\rightt}{\textnormal{\texttt{R}}}%
\newcommand{\tape}{\textnormal{\texttt{TAPE}}}

\newcommand{\NP}{\textsf{NP}}

\newcommand{\BO}{\mathcal{O}}

\newcommand{\ttime}{\ensuremath{{\mathsf{TIME}}}\xspace}
\newcommand{\exptime}{\ensuremath{{\mathsf{EXPTIME}}}\xspace}
\newcommand{\CONP}{\ensuremath{{\mathsf{co\text{-}NP}}}\xspace}
\newcommand{\PP}{\ensuremath{{\mathsf{P}}}\xspace}

\newcommand{\IRS}{image resolution set}

\newcommand{\expMC}{\text{{\sc Exp-\MC-Bhp}}\xspace}

\newcommand{\sizeMC}{\text{{\sc \mfu-detect}$_1$}\xspace}
\newcommand{\sizeMCm}{\text{{\sc \mfu-detect}$_m$}\xspace}
\newcommand{\expbhp}{\text{{\sc BHP}}\xspace}
\newcommand{\pexpbhp}{\text{{\sc PExp-BHP}}\xspace}

\newcommand{\Tautology}{\text{{\sc Tautology}}\xspace}
\newcommand{\sizearbMC}{\text{{\sc \mfu-detect}$^{\poly}_{\geq0}$}\xspace}

\newcommand{\MC}{\textsc{MC}\xspace}

\newcommand{\FMC}{{\cal M}}

\newcommand{\TM}{\textsc{TM}\xspace}
\newcommand{\eval}{\textsc{Eval}\xspace}
\newcommand{\UTM}{\textsc{UTM}\xspace}

\newcommand{\UMCTM}{{\cal U}_{\mfu}}
\newcommand{\MUX}{\textsf{MUX}\xspace}
\newcommand{\CMUX}{\textsf{CMUX}\xspace}

\newcommand{\poly}{\textrm{poly}}

\newcommand{\NO}{\textsc{NO}}

\newcommand{\return}{\textbf{return }}
\newcommand{\IIf}{\textbf{If }}
\newcommand{\then}{\textbf{then }}
\newcommand{\otherwise}{\textbf{otherwise }}
\newcommand{\reject}{\textit{reject }}
\newcommand{\accept}{\textit{accept }}

\begin{document}

\ifnum\arxiv=0
\title{Metastability-Containing Turing Machines}
\author{%
\IEEEauthorblockN{Johannes Bund\IEEEauthorrefmark{1}$^{ \orcid{0000-0002-1108-1091}}$, \and Amir Leshem\IEEEauthorrefmark{1}$^{ \orcid{0000-0002-2265-7463}}$, {\textit{ Fellow, IEEE}}, \and Moti Medina\IEEEauthorrefmark{1}}$^{ \orcid{0000-0002-5572-3754}}$\\\vspace{1em}%
\IEEEauthorblockA{\IEEEauthorrefmark{1}\textit{Faculty of Engineering, Bar-Ilan University}\\
Ramat Gan, Israel \\
bundjoh@biu.ac.il, amir.leshem@biu.ac.il, moti.medina@biu.ac.il}
\thanks{This work has been done while Johanes Bund was at Bar-Ilan University and Universit\'{e} Paris-Saclay, CNRS, ENS Paris-Saclay, Laboratoire M\'{e}thodes Formelles. 
\emph{Funding:} {
{This research was supported by the Israel Science Foundation under Grant
867/19 and 554/23. Amir Leshem was partially supported by ISF grant 2197/22.}}}
}
\fi

\maketitle

\begin{abstract}
  Metastability is a spurious mode of operation in digital signals, where an electrical signal fails to settle into a stable state within a specified time, leading to uncertainty and potentially failing downstream hardware. 
A system that computes the closure over all possibilities, given an uncertain input, is called a Metastability-containing system. 

While prior work has addressed metastability-containing systems in the context of combinational and clocked circuits, state machines, and logic formulas, its implications for general-purpose computation remain largely unexplored. 

In this work, we study the metastability-containing systems within an abstract computational model: The Turing Machine. 
This approach allows us to investigate the computational limits and capabilities of Turing Machines operating under uncertain inputs. Specifically, we prove that in general the metastable closure of a Turing Machine is non-computable. Then we discuss cases where the meta-stable closure is computable: For \exptime problems, we prove that resolving even a single uncertain bit is \exptime-complete. In contrast, we prove that for polynomial time problems, the meta-stable closure is polynomial time computable for a logarithmic number of uncertain bits, but \CONP-complete, when the number of undefined inputs is arbitrary. Finally, we describe a hardware-realizable Universal Turning Machine that computes the metastable closure of any given bounded-time Turing Machine with at most an exponential blowup in time.  
\end{abstract}

\ifnum\arxiv=0
\begin{IEEEkeywords}
Metastability, Metastability Containment, Hazard Free Computation, Turing Machines, Computational Complexity, Computation under Uncertainty
\end{IEEEkeywords}
\else
\paragraph{Keywords.}
Metastability, Metastability Containment, Hazard Free Computation, Turing Machines, Computational Complexity, Computation under Uncertainty
\fi

\section{Introduction}\label{sec:intro}
Metastability is an uncertain equilibrium state that is inherent to any machine that operates 
    on bistable storage elements, e.g., in electronic circuits, registers can become metastable 
    when the inputs have incorrect timing~\cite{ginosar11tutorial}. 
In turn, high-speed microprocessors that receive many inputs from sensors, 
    are likely to see metastable inputs. 
The metastable state (denoted by $\mfu$) will eventually resolve to a stable state ($0$ or $1$).
Since the stable state and resolution time are not known a priori, following~\cite{friedrichs18containing} we consider the metastable 
    state to be a superposition of the stable states that propagates through the circuits. 
    
Combinational circuits that have the best possible behavior under metastable inputs are called 
    \emph{metastability-containing} (M-Containing) (we elaborate on this in 
     Section~\ref{sec:closure} and Appendix~A\ifnum \supp=1
       \text{ }in 
      the supplementary material\fi). 
It is well known that for each combinational circuit, there is an equivalent M-containing circuit that may differ in 
    circuit size and depth (see, e.g.~\cite{huffman57design}). 
More recently, M-containing circuits gained interest again, when an unconditional lower bound on their complexity was proven in~\cite{IkenmeyerKLLMS19}. 
Intuitively, the result showed that there are circuits where the smallest M-Containing circuit 
    is exponentially larger than the smallest non-M-Containing circuit.
Naturally, the interest in studying M-Containment in other computational models emerges.
M-Containment in Finite State Machines was discussed in~\cite{BundLM22} and~\cite{TarawnehFL17}.

\sloppy
The most prevalent computational model in 
    Computer Science is the \emph{Turing machine} (\TM)~\cite{AB09}.
A \TM is essentially a Finite State Machine that has an unlimited
    tape on which it can store and access data.
It is well known that \TM's are equally expressive as any other proposed model of computation
    such as $\lambda$-calculus, register machines, cellular automata, Conway's game of life, and others~\cite{AB09}.
As such, \TM's seem able to simulate all physically realizable models of computation with very little loss of efficiency. The well-known Church-Turing thesis \cite{turing1936computable,church1936unsolvable} asserts that any computation can be simulated by an equivalent Turing machine. 
Hence, the study of \TM's makes it possible to faithfully compare the complexity of algorithms, and argue about their efficiency.

\subsection{Related Work}
M-Containing combinational circuits are a class of ``fault-tolerant''  combinational circuits~\cite{friedrichs18containing}.~\footnote{When it is clear from the context, we write ``circuits'' instead of ``combinational circuits''. }
In a model where inputs to a combinational circuit may become faulty (metastable) 
    such that the input to the circuit is (partially) unknown, a M-Containing circuit 
    reduces the uncertainty in its outputs due to unknown inputs to a minimum.
In other words, an M-Containing circuit computes the most precise output possible~\cite{IkenmeyerKLLMS19}.

In this work, we extend the notion of M-Containment to \TM's and formally define
    how \TM's operate with metastable inputs.
To our knowledge, there are no results on faults in \TM's, due to metastability. 

Possibly the most related results to this paper are given in~\cite{IkenmeyerKLLMS19} and~\cite{KomarathS20}.
In~\cite[Coro.~1.5]{IkenmeyerKLLMS19}, the authors regard a family of circuits, which is a computational 
    model that is stronger than simple combinational circuits.
In~\cite{KomarathS20}, the authors show that there is no algorithm detecting violations of 
    M-Containment in polynomial time assuming that the strong exponential time hypothesis is true.
To our knowledge the study of M-Containing \TM's has not 
    been investigated in the literature so far.
\subsection{Novelty and Contribution}
%
%
This paper explores \TM's under metastable inputs.
In this work, we lift the theory from combinational circuits to general-purpose computers, extending the theory of~\cite{friedrichs18containing}.
We provide new results regarding the computability of the meta-stable closure for general computable functions, then we provide complexity results and describe a detailed construction of meta-stable containing Turing machines using Kleene Ternary logic.

\ifnum\arxiv=0
{\bf \em Computability of the Metastable Closure. }
\else
\paragraph{Computability of the Metastable Closure. }
\fi
In Section~\ref{sec:computability}, we establish the noncomputability of the metastable closure. Given an arbitrary Turing machine $M$ and a ternary input, the computational task $\FMC$ of determining the size of the set of values obtained by applying the function computed by $M$ to the resolution set of the ternary input, which we refer to as \emph{\IRS}, is noncomputable. This is formalized in the following theorem. 
\begin{restatable}{theorem}{notcomp}\label{thm:notcomp}
    The function $\FMC$ is non-computable.      
\end{restatable}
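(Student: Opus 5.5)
We prove the statement by reduction from the halting problem. Since the formal model appears only later in Section~\ref{sec:computability}, we fix the reading we will use. The input to $\FMC$ is a pair $(\langle M\rangle, x)$, where $M$ is a Turing machine computing a (possibly partial) function $f_M$ and $x\in\{0,1,\mfu\}^n$ is a ternary string. $\FMC$ outputs $|\{f_M(y) : y \in \res(x)\}|$, where $\res(x)$ is the set of binary strings obtained by replacing each $\mfu$ by $0$ or $1$. We treat a non-halting run as producing a distinguished value $\bot$ (or as contributing nothing to the image). The construction below is arranged to work under either convention.

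Suppose, for contradiction, that some Turing machine $A$ computes $\FMC$. Given an instance $(\langle N\rangle, w)$ of the halting problem, we effectively construct a machine $M_{N,w}$ that reads a single input bit $b$ and behaves as follows:
\begin{itemize}
  \item if $b=0$, it immediately outputs $0$;
  \item if $b=1$, it simulates $N$ on $w$, and if that simulation halts it outputs $1$.
\end{itemize}
We then run $A$ on $(\langle M_{N,w}\rangle, \mfu)$. Here $\res(\mfu)=\{0,1\}$, and the image set is $\{0,1\}$ exactly when $N$ halts on $w$.
\begin{itemize}
  \item Under the convention that divergence contributes nothing, the image set is $\{0\}$ when $N$ does not halt, so the size is $2$ versus $1$.
  \item Under the $\bot$ convention, the image set is $\{0,\bot\}$ when $N$ does not halt, which also has size $2$. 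We therefore modify the construction: on $b=1$, $M_{N,w}$ outputs $0$ if $N$ halts on $w$. The image is then $\{0\}$ (size $1$) if $N$ halts and $\{0,\bot\}$ (size $2$) otherwise.
\end{itemize}
In either case $A$ decides the halting problem, which is a contradiction.

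The main step requiring care is matching whatever definition of the \IRS{} the paper uses for partial functions. If instead $\FMC$ is only defined on total machines, we use a time-bounded variant. Let $M_{N,w}$ on input bit $b$ together with $k$ further bits encoding a step bound $t$ output $1$ iff $b=1$ and $N$ halts on $w$ within $t$ steps, and output $0$ otherwise. This machine is total. Giving it the ternary input consisting of $\mfu$ in every position, the image has size $2$ iff $N$ halts within $2^k$ steps. Since $k$ must be unbounded, this by itself only yields hardness (the bounded problem is decidable), so for the total case we need a genuinely unbounded source of uncertainty. If the paper's model allows the tape beyond the input to be metastable, or allows inputs of unbounded length, we encode the step bound there. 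Otherwise the reduction through partial machines above is the intended one. Whichever convention the paper adopts, we check that the size of the image set differs between the halting and non-halting cases; the remainder of the argument is the standard diagonal contradiction.
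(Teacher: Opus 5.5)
Your argument is correct, but it takes a different route from the paper. The paper fixes $k$ and lets ${\cal A}_k$ be the set of machines accepting every string in $\{0,1\}^k$. It gets undecidability of ${\cal A}_k$ from Rice's theorem. It then decides ${\cal A}_k$ with the hypothetical machine $H$ for $\FMC$, by querying $(\langle M\rangle,\mfu^k)$ and accepting iff the answer is $1$. You instead reduce the halting problem directly via $M_{N,w}$ and one uncertain bit, separating an image of size $1$ from one of size $2$.

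Your route is self-contained, since it inlines the needed special case of Rice's theorem. It also uses only the \emph{size} of the image, i.e., whether the closure is $\mfu$, so it covers the introduction's ``size of the \IRS'' formulation directly. The paper's decider instead needs the actual value of $f_\mfu(\mfu^k)$ to tell $\{1\}$ from $\{0\}$. In exchange, the paper's test ``output $=1$'' does not care whether divergence counts as $0$ or as a special symbol, so it needs no case split.

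In the paper, $f$ is the characteristic function of $L(M_f)$ for a machine that merely recognizes it, so divergence counts as $0$. Under that convention your first construction already gives $f_\mfu(\mfu)=0$ versus $\mfu$, and works without modification.

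Your third paragraph can be dropped. Restricted to total machines, $\FMC$ would actually be computable: a finite $x$ has finitely many resolutions, each simulable to completion. Encoding a step bound in longer inputs does not change that. Non-termination is therefore essential to the theorem, and your partial-machine argument is the relevant one.
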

\ifnum\arxiv=0
{\bf \em Deciding whether there is a single resolution of a single uncertain bit in a Turing machine running a finite time is \exptime-Complete. }
\else
\paragraph{Deciding whether there is a single resolution of a single uncertain bit in a Turing machine running a finite time is \exptime-Complete. }
\fi
While the metastable closure is non-computable, we can focus on computable approximations: Assume that the input of a Turing machine contains metastable bits. What is the size of the \IRS\ of that Turing machine after $k$ computation steps, where the input is the machine, its metastable inputs, and the number of execution steps? 
In Section~\ref{sec:exptime}, we prove that the problem above is hard. Moreover, we prove that resolving a single metastable bit is exponentially hard. This is a sharp contrast to the naive expectation that MC Turing machines should be complex because they need to resolve exponentially many input instances when there are many unresolved inputs, and resolving few instances should be easy. To that end, we define a language $L(\sizeMC)$ and prove that it is \exptime-complete:
\begin{restatable}{theorem}{exphard}\label{thm:exphard}
    The language $L(\sizeMC)$ is \exptime-complete. 
\end{restatable}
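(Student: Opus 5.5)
The plan is to prove membership in \exptime and \exptime-hardness separately. I read $L(\sizeMC)$ as follows: its instances are triples $\langle M, x, k\rangle$, where $M$ is a \TM, $x$ is a ternary input with exactly one metastable position $\mfu$, and $k$ is a step bound written in binary. An instance belongs to the language iff the \IRS\ of $M$ on $x$ after $k$ steps contains more than one value, i.e.\ the single uncertain bit is not resolved. If the paper's convention is instead ``the \IRS\ is a singleton'', the same argument applies with the answers swapped, since \exptime is closed under complement.

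\emph{Membership.} The resolution set of $x$ has exactly two elements, $x_0$ and $x_1$, obtained by replacing $\mfu$ with $0$ and with $1$. A deterministic machine simulates $M$ on $x_0$ and on $x_1$ for $k$ steps each, using a standard universal simulation with polynomial overhead per step. It then compares the two outputs, or the two configurations, whichever the definition of the \IRS\ after $k$ steps uses, and accepts iff they differ. Because $k$ is given in binary, $k \le 2^{|\langle M,x,k\rangle|}$, so the total running time is $\poly(|\langle M,x,k\rangle|)\cdot 2^{O(|\langle M,x,k\rangle|)}$. This places $L(\sizeMC)$ in \exptime.

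\emph{Hardness.} I would reduce from the canonical \exptime-complete bounded halting problem \expbhp: given $\langle N, w, t\rangle$ with $t$ in binary, decide whether $N$ accepts $w$ within $t$ steps. From $\langle N,w,t\rangle$ we build a fixed-shape machine $M'$ that works on input $b\,w$. It reads the first bit $b$. If $b=0$, it writes output $0$ and idles until the step budget ends. If $b=1$, it simulates $N$ on $w$ for at most $t$ steps while keeping a binary counter, writes $1$ if $N$ accepted and $0$ otherwise, and then idles. The reduction outputs $\langle M', \mfu\,w, k'\rangle$, where $k'=p(|N|,|w|,\log t)\cdot t$ is an explicit upper bound on the running time of the $b=1$ branch, including counter maintenance and simulation overhead. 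The number $k'$ has polynomially many bits, and $M'$ is obtained from $N$ by a polynomial-time syntactic construction. After $k'$ steps the \IRS\ is $\{0\}\cup\{\text{result of } N \text{ on } w \text{ within } t\}$. This set has size $2$ exactly when $N$ accepts $w$ within $t$ steps, which proves correctness.

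\emph{Main obstacle.} The delicate step is making the ``output after $k'$ steps'' well-defined and identical on both branches apart from the intended bit. If the \IRS\ is defined over whole tape contents or configurations, the $b=0$ and $b=1$ branches will differ trivially: different state, head position, and leftover simulation data on the tape. That would put every instance in the language. To prevent this, both branches must finish in one canonical final configuration whose only difference is the designated output cell. So the $b=1$ branch erases its work tape, returns the head to a fixed position, and enters a halting state shared with the $b=0$ branch. I would also choose $k'$ generously so that both branches are guaranteed to have halted by time $k'$, and verify the bound with a careful but routine count of the simulation overhead.
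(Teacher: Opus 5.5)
Your proposal is correct. Your membership argument is the same as the paper's, but your hardness reduction takes a more direct route.

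\textbf{The paper's reduction.} It first proves an auxiliary lemma that the restricted problem \pexpbhp\ (time bounds whose binary encoding has the form $10^*1$) is \exptime-complete. It then places the single $\mfu$ in the \emph{time bound}. It replaces the leading bit of $k$ by $\mfu$ and uses $k'=\mfu0^*1$ as the input of a machine $M'_{M,s}$ with $M$ and $s$ hard-wired. This machine runs $\UTM(\langle M,s,k\rangle)$ and outputs $1$ iff $M$ halted and the input resolved to $00^*1$.

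\textbf{Your reduction.} You prepend a fresh selector bit $b=\mfu$ to the original input and let the $b=0$ branch output a constant. This makes the special form of the bound, and with it the auxiliary lemma, unnecessary, so you can reduce from \expbhp\ directly. The paper's \expbhp\ asks about termination rather than acceptance, so to match it your $b=1$ branch should write $1$ iff $N$ halts within $t$ steps. This is a trivial change, and the acceptance variant is equally \exptime-complete.

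\textbf{The step bound.} This is the more substantive difference. The paper sets the bound of the produced instance to $\ell=k+1$. However, $M'_{M,s}$ must itself carry out a universal simulation of $k$ steps of $M$ and then inspect its input, which in general takes more than $k+1$ of its own steps. So in a YES instance both resolutions can end in the failure symbol, and the reduction breaks. Your bound $k'=p(|N|,|w|,\log t)\cdot t$ still has polynomially many bits and is exactly what repairs this. On this point your argument is more careful than the paper's.

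\textbf{Two small remarks.} First, the language compares outputs of the time-bounded \UTM\ (accept, reject, or failure), not configurations, so your worry about configurations is moot. Erasing the tape is still needed to meet the paper's output convention. Second, the $b=0$ branch must halt with output $0$ rather than ``idle until the budget ends''. A branch that has not halted yields the failure symbol, which would put every instance into the language. Your final paragraph already arranges this correctly.
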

\ifnum\arxiv=0
{\bf \em Deciding whether there is a single resolution of any number of uncertain bits in a Turing machine running in polynomial time is \CONP-Complete. }
\else
\paragraph{Deciding whether there is a single resolution of any number of uncertain bits in a Turing machine running in polynomial time is \CONP-Complete. }
\fi
We have already proved that the size of the \IRS\ is non-computable, and that even resolving a single bit can be exponentially hard. However, there is yet another twist. If we limit ourselves to polynomial-time problems (P). Resolving a small (logarithmic) number of bits is easy and can be done in polynomial time. However, resolving any number of input bits is once again hard. More precisely, it is a \CONP -complete problem. In section \ref{sec:coNP} we prove 
\begin{restatable}{theorem}{coNP}\label{thm:coNP}
    The language $L(\sizearbMC)$ is \CONP-complete. 
\end{restatable}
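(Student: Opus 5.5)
We have to guess the definition of $L(\sizearbMC)$. The most plausible reading: an instance is a Turing machine $M$ with a polynomial time bound (given explicitly, e.g. a clock $1^{t}$ or a polynomial), together with a ternary input $x\in\{0,1,\mfu\}^n$ containing an arbitrary number of metastable bits. The instance is in the language iff the image resolution set $\{M(y) : y \in \res(x)\}$ has size exactly one, i.e.\ every resolution of $x$ leads to the same output. We prove membership in \CONP{} and \CONP-hardness separately.

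\emph{Membership in \CONP.} Show that the complement is in \NP. A certificate for non-membership is a pair of resolutions $y,y'\in\res(x)$ such that $M(y)\neq M(y')$. Each resolution has length $n$, and checking it is consistent with $x$ (stable positions agree) is linear. Running $M$ on $y$ and $y'$ for the polynomial time bound takes polynomial time, and comparing the outputs is also polynomial. So a nondeterministic verifier decides the complement, and the language lies in \CONP. If the language instead counts the size of the set (e.g.\ ``size at most $m$''), the same argument works with $m+1$ resolutions and pairwise distinct outputs, provided $m$ is polynomial.

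\emph{Hardness.} Reduce from \Tautology, which is \CONP-complete. Given a propositional formula $\varphi$ on variables $v_1,\dots,v_n$, fix a single polynomial-time machine $M_{\eval}$. On input $\langle\varphi\rangle\# a$ with $a\in\{0,1\}^n$, it outputs $\varphi(a)$. The reduction outputs $M_{\eval}$, its time bound, and the ternary input $\langle\varphi\rangle\#\mfu^n$, where the encoding of $\varphi$ is stable. The resolutions of this input are exactly $\langle\varphi\rangle\# a$ for all $a$, so the image resolution set is $\{\varphi(a) : a\in\{0,1\}^n\}$.

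This set has size one iff $\varphi$ is constant, which is not quite tautology. Fix this by evaluating $\varphi' = \varphi \wedge w$ with a fresh variable $w$ instead: its image is $\{0\}\cup\{\varphi(a)\}$. That set equals $\{0\}$ for unsatisfiable $\varphi$, and has size two otherwise. Alternatively, hardwire the output to include a constant $1$ and use $\neg\varphi$, or simply reduce from the complement of SAT, which is also \CONP-complete: $\varphi$ is unsatisfiable iff $\varphi\wedge w$ yields image $\{0\}$, i.e.\ size one. If the definition instead fixes the expected single value, use $\varphi$ directly: it is a tautology iff the image is $\{1\}$. The reduction is clearly polynomial time.

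\emph{Main obstacle.} The delicate step is aligning with the exact definition of the language. This includes how the time bound is supplied, so that the \NP{} verifier for the complement is genuinely polynomial in the input length (the bound should be unary or a fixed polynomial). It also includes whether membership means ``size one'' or ``size at most $m$''. The padding trick with the fresh variable handles the constant-versus-tautology mismatch.
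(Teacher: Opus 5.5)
Your argument is correct and has the same skeleton as the paper's. The complement is in \NP{} via two guessed resolutions that are simulated and compared, and hardness comes from feeding a stable formula encoding followed by $\mfu^n$ to an evaluator. Where you differ is in how you repair the fact that this naive map only tests whether the formula is constant. The paper changes the machine: it first evaluates $\tau(0^n)$. If this is $0$, it outputs $0$ on $y=0^n$ and $1$ on every other $y$, which forces the instance out of the language; otherwise it outputs $\tau(y)$. This reduces from \Tautology directly and leaves the formula untouched. You change the formula instead, conjoining a fresh variable $w$ so that $0$ always lies in the image, and reduce from the complement of SAT. Your version keeps the machine a plain evaluator and always supplies at least one $\mfu$. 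The paper's gadget, by contrast, tacitly needs $n\ge 1$: for a variable-free contradiction its machine outputs $0$ on the unique resolution, so it produces a yes-instance. If you prefer to reduce from \Tautology itself, the dual padding $\varphi\vee w$ works, since its image is $\{1\}\cup\{\varphi(a)\}$. On the definition itself, the paper's language is $\{\langle M,x,c\rangle : \exists b\in\IB\ \forall x'\in\res(x)\colon \UTM(\langle M,x',|x|^c\rangle)=b\}$, so the common value must be a bit. An instance on which every resolution times out is therefore a no-instance, and your complement certificate should also allow a single resolution whose bounded run yields no bit; the paper omits this as well. Your time-bound caveat is well founded too: $c$ is part of the instance in binary, and the paper's claim that simulating $|x|^c$ steps is polynomial holds only if $c$ is treated as a fixed constant or the bound is given in unary.
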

\ifnum\arxiv=0
{\bf \em Kleene Logic and Combinational Circuits. }
\else
\paragraph{Kleene Logic and Combinational Circuits. }
\fi
In Section~\ref{sec:kleene}, we define a subclass of Turing Machines, which we refer to by \emph{Natural \TM's} (see Definition~\ref{def:nattm}), where their transition functions are restricted to functions that are implementable by combinational circuits in Kleene logic. Here, we also briefly explain why not all ternary functions are implementable by combinational circuits and show that one cannot detect or resolve a metastable bit using combinational circuits. In turn, this class of Natural \TM's is also implementable by combinational circuits and standard memory devices. 

\ifnum\arxiv=0
{\bf \em Implementation of a Natural and Universal M-Containing Turing machine. } 
\else
\paragraph{Implementation of a Natural and Universal M-Containing Turing machine. }
\fi
Finally, in Section~\ref{sec:simulation} (complementing the \exptime-completeness discussion of Section~\ref{sec:exptime}), we present a Universal Natural M-Containing Turing machine. This machine simulates the metastable closure of any given (computable) Turing Machine. The simulation comprises an exponential time overhead, i.e., the exponential simulation will result in an \exptime\ running time for any \exptime\ problem. 

%
%
%
%
%
\begin{restatable}{theorem}{mcuni}\label{thm:mcuni}
    Given a \TM $M$ of running time $t(n)$ and input $x\in\IT^*$, then $\UMCTM$ computes $M_\mfu(x)$ with exponential blowup $\tilde{\BO}(t(n)\cdot{2^n})$.
\end{restatable}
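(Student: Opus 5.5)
The proof is a direct simulation. On input $\langle M, x\rangle$ with $x\in\IT^n$, the machine $\UMCTM$ enumerates the resolutions $y\in\IB^n$ of $x$, runs $M$ on each of them, and merges the outputs with the superposition operator $\starr$. Formally, $M_\mfu(x)=\bigstarr_{y\in\res(x)} M(y)$ by the definition of the metastable closure in Section~\ref{sec:closure}.

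The tape layout I would use has four regions:
\begin{itemize}
\item a copy of $x$;
\item a binary counter $c$ of length $n$ (equivalently, the current candidate $y$ itself);
\item a work region where a standard universal \TM simulates $M$ on $y$;
\item an accumulator $A\in\IT^*$ holding the running $\starr$ of the outputs seen so far.
\end{itemize}

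The loop runs as follows.
\begin{enumerate}
\item Initialise $y$ by copying $x$ and replacing every $\mfu$ by $0$.
\item Check whether $y\in\res(x)$: a linear scan comparing $y_i$ with $x_i$ at every position where $x_i\neq\mfu$.
\item If so, simulate $M(y)$ with the universal machine for $t(n)$ steps.
\item Update $A\leftarrow A\starr M(y)$ bitwise. Here $0\starr0=0$, $1\starr1=1$, and anything else gives $\mfu$. The case of outputs of different lengths is handled by padding, or by treating a mismatch as $\mfu$, consistently with the paper's definition.
\item Increment $y$ in binary and repeat.
\end{enumerate}
Alternatively, in step 5 one increments only the positions where $x_i=\mfu$. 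That gives $2^{k}\le 2^n$ iterations, where $k$ is the number of metastable bits, and makes step 2 unnecessary.

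Correctness follows because every resolution is visited exactly once. The operator $\starr$ is associative, commutative and idempotent, so the final $A$ equals $\bigstarr_{y\in\res(x)}M(y)=M_\mfu(x)$.

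For the running time, a single iteration costs four pieces:
\begin{itemize}
\item the universal simulation, which costs $\BO(t(n)\log t(n))$ by the Hennie–Stearns overhead, with the constant depending on $|M|$;
\item restoring the work tape;
\item the counter increment, which is amortised $\BO(1)$ plus a head walk of length $\BO(n+t(n))$ between the regions;
\item the merge into $A$, which costs $\BO(t(n))$ because the output length is at most $t(n)$.
\end{itemize}
Since $t(n)\ge n$ can be assumed, each iteration costs $\tilde{\BO}(t(n))$. Multiplying by at most $2^n$ iterations gives $\tilde{\BO}(t(n)\cdot 2^n)$.

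The main obstacle is making $\UMCTM$ \emph{natural} in the sense of Definition~\ref{def:nattm}, that is, realisable with Kleene-logic transition functions. Section~\ref{sec:kleene} shows that such a machine cannot detect a $\mfu$ symbol with combinational logic, so it cannot branch on whether $x_i=\mfu$.

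My first attempt around this is to never inspect $x$ for metastability. Instead the counter runs over all $2^n$ binary strings $y$, and the machine computes the Kleene predicate ``$y$ is consistent with $x$'' using only bitwise operations. The concern is that this predicate may itself evaluate to $\mfu$. To handle that, I would use it only as a mask in the merge, setting $A\leftarrow A\starr M(y)$ when the mask is $1$ and leaving $A$ unchanged when it is $0$. This is an M-containing multiplexer (\CMUX), and it behaves correctly under $\mfu$ masks: it contributes $A\starr M(y)$, which is harmless because $M(y)$ contributes exactly for the resolutions $y$ of $x$. The simulation of $M(y)$ itself runs on a purely binary $y$, so it involves no metastability at all.

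Checking that the \CMUX-based masked merge yields exactly the closure, with the initial value of $A$ treated as a neutral element (e.g.\ via a flag bit), is the step I expect to require the most care.
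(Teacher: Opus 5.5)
Your first construction (enumerate the resolutions of $x$, simulate $M$ on each, fold the outputs with $\starr$) correctly shows that \emph{some} \TM computes $M_{\mfu}(x)$ within $\tilde{\BO}(t(n)\cdot 2^n)$ steps. It even achieves $\tilde{\BO}(t(n)\cdot 2^k)$ steps and space polynomial in $n+t(n)$. It is not, however, $\UMCTM$, which the paper introduces specifically as a \emph{Natural} (and Oblivious) machine. Your steps~1--2 branch on whether $x_i=\mfu$, which is the detector $d$ of Section~\ref{sec:kleene}. Your fold computes $\starr$, which maps the stable pair $(0,1)$ to $\mfu$ and so violates Theorem~\ref{thm:natural}. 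The paper avoids both by never inspecting $x$ while enumerating. It simulates $M$ on all $2^n$ binary strings $i$, stores $M(i)$ at position $\langle i\rangle_B$ of a table, and lets $x$ enter only as the select input of $T_{\CMUX}$. Correctness is then one line: the table is stable, so $\CMUX(y,x)=\bigstarr\{y[\langle x'\rangle_B]\mid x'\in\res(x)\}=M_{\mfu}(x)$.

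Your natural variant, which is what $\UMCTM$ has to be, has a genuine gap. The data input $A\starr M(y)$ is not natural, so replace it by the natural update $A\leftarrow\CMUX(A,M(y),m_y)$; at $m_y=\mfu$ this already yields $A\starr M(y)$. For non-stable $x$, however, $m_y=\mfu$ for every $y\in\res(x)$ and $m_y=0$ otherwise. The final value is therefore $A_0\starr\bigstarr_{y\in\res(x)}M(y)$, so the initial value leaks into the output. For instance, $n=1$, $x=\mfu$, $M\equiv 1$, $A_0=0$ yields $\mfu$ instead of $1$.

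A flag bit does not repair this. $F\leftarrow\OR_{\mfu}(F,m_y)$ becomes $\mfu$ at the first resolution and stays $\mfu$ forever, because all masks lie in $\{0,\mfu\}$ and the OR of minterm indicators is itself hazardous. So $F$ never certifies that $A$ is initialised, and any update gated on $F=\mfu$ superposes the uninitialised branch back in.

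The missing idea is to merge along single select bits in a balanced tree: $\CMUX(L,R,x_j)$, where $L$ and $R$ are the closures over the two halves of the index space split by bit $j$. No neutral element is needed there, and this is exactly $T_{\CMUX}$. You can keep your advantage of not storing a $2^n$-cell table by evaluating that tree in post-order while enumerating $y$ in binary. Keep an $\BO(n)$-size stack of partial results and merge like a binary counter, where merge decisions depend only on the stable counter. Each leaf then costs $\tilde{\BO}(t(n))$ plus amortised $\BO(1)$ merges, so both naturalness and the $\tilde{\BO}(t(n)\cdot 2^n)$ bound survive.
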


The $\UMCTM$ is a Universal, Natural and Oblivious \TM, i.e., it can be implemented by standard gates, and the movements of the reading head depend only on the length of the input and not its content. 

In Corollary~\ref{cor:generality} we extend Theorem~\ref{thm:mcuni} to compute the \MC of arbitrary \TM's with unknown running time. The running time is bounded by an arbitrary bound $T$ on the computational steps.
Furthermore, the extension allows for additional uncertainty in the description of the \TM and $T$. 
This proves that any computability class containing \exptime is closed under the metastable closure operator.

\subsection{Further Related Work}\label{sec:related}
Before describing the structure of the paper, we also provide other related works that give the general context for our results. 
In many cases where the input to an algorithm is partially uncertain, it is possible to deduce parts of the complete output. For example, the \emph{multiplexer} is a circuit that selects one of two inputs: if $s=0$, then it outputs the value input to its $a$ input, and if $s=1$m, then it outputs the value input to its $b$ input. When the select signal $s$ is metastable, the standard multiplexer fails to select the correct input even if $a=b=1$~\cite{FriedrichsK17}.
An M-Containing multiplexer solves this problem; it will output a $1$, e.g., see the \CMUX\ circuit given in Section~\ref{sec:simcmux}. The \CMUX\ circuit has two additional gates compared to the classical non-M-Containing multiplexer. 

Huffman noted that although every combinational circuit has an M-Containing counterpart, the attained M-Containing circuits are exponentially larger than the combinational counterpart~\cite{huffman57design}.
Recently, the super polynomial lower bounds for the size of combinational M-Containing circuits were established by~\cite{IkenmeyerKLLMS19}. 
The result gained interest because it connected M-Containing circuits to monotone circuits. An improved upper bound was given by Jukna~\cite{Jukna21}.
Friedrichs et al.~\cite{friedrichs18containing} establish that synchronous circuits suffer the same blow-up as combinational circuits. Friedrichs et al.~\cite{friedrichs18containing} use the fact that in digital circuit design, it is well established that synchronous and combinational circuits have the same computational power.
Synchronous circuits, i.e., circuits that use a clock and registers to compute an output in multiple rounds, can be transformed into circuits that compute the same output in one round but require more space, i.e., trading off circuit size with multiple computation rounds in a clocked circuit.
This technique is called \emph{unrolling} of the circuit. 
Despite the general lower bound, specific tasks can have optimal size and depth circuits performing on par with the Non-M-Containing circuit~\cite{BundLM17,bund2019optimal}.

No sub-exponential time algorithm can detect whether a circuit is M-Containing, assuming the strong exponential time hypothesis is true~\cite{BN20}. 
In~\cite{IkenmeyerKS23}, the authors consider M-Containing formulas, i.e., fan-out one combinational circuits. 
By extension of Krachmer-Widgerson games to M-Containing formulas, the authors present lower bounds on the complexity of M-Containing formulas. In both works, \cite{IkenmeyerKLLMS19} and~\cite{IkenmeyerKS23}, it is emphasized that the complexity of M-Containing circuits could help to link Boolean circuit complexity to monotone circuit complexity.

A topological argument by Marino~\cite{Marino81} shows that the metastable state cannot be avoided in bistable devices, such as registers in digital circuits.
    
Historically, the problem of unknown inputs was studied early on by Goto~\cite{goto49relay} and has been discussed in various forms throughout the years~\cite{huffman57design,unger1995hazards}.
Existing research in cybersecurity~\cite{hu12complexity,tiwari09flow} seem 
    comes up with the same concept in their own terms.
This is evidence that studying M-Containment is important in circuits and beyond.

The model of M-Containing circuits is realistic, in the sense 
that it meets the real behavior of     CMOS\footnote{Complementary Metal–Oxide–Semiconductor} gates implemented in silicon~\cite{BundLM20}. Theory of M-Containing circuits has been used to design fault-tolerant CMOS circuits~\cite{FuggerKLP17,FriedrichsK17}.

Metastable signals are a common issue in communication between clock domains; to reduce risks, messages are delayed.
The role of M-Containing circuits for clock synchronization and clock domain crossing is recognized in~\cite{BundFLM20,BundFLMR20,BundFM23}.
The authors present high-speed, low-latency solutions for M-Containing synchronization of clock domains.
    
The concept of metastability contacting circuits appears under different names in the literature. 
Most prominently, Huffman~\cite{huffman57design} coined the term \emph{static hazards}. 
In our model and terminology, circuits free of static hazards and M-Containing circuits are synonyms. 
A deeper discussion can be found in~\cite{bund2022hazard}.
%
\ifnum\arxiv=0
\subsection*{Organization of the paper}
\else
\paragraph{Organization of the paper. }
\fi

After the introduction in \cref{sec:intro}, we give the necessary definitions of \TM's, computability, and complexity in \cref{sec:prelim}.
Preliminaries on metastable computation are presented in \cref{sec:closure}, which introduces the \emph{Metastable Closure}.
\Cref{sec:computability} states our first result, non-computability of the metastable closure.
By focusing our attention on the metastable closure of computable functions in \cref{sec:exptime}, we demonstrate \exptime-hardness.
In \cref{sec:kleene}, we introduce the Kleene logic to \TM's and define the class of Natural \TM's that are amenable to hardware implementation.
Finally, in \cref{sec:simulation}, we design and prove correct a Natural Universal \TM.

\section{Preliminaries}\label{sec:prelim}
In this section, we set up the notation used throughout the paper and describe the preliminary material required for the main proofs.

\ifnum\arxiv=0
\medskip
\noindent
\textbf{Notations. }
\else
\paragraph{Notations. }
\fi
Let $A$ denote a set, then $A^* \coloneqq \cup_{i\in \IN} A^i$, where for $i \in \IN$, $A^i \coloneqq \{a_1 \circ \ldots \circ a_i \mid \forall j:a_j \in A\}$, where $\circ$ is the concatenation operator. Note that $A^0$ is defined as the ``empty string''.
%
Let $y \in \IB^n$; we denote by $y_i$ the $i$th bit of $y$ and by $y[i:j]$ for $0 \leq i \leq j\leq n-1$ the substring obtained by removing the leftmost $i$ bits and the rightmost $n-1-j$ bits. 

\subsection{Turing Machines}\label{sec:tm}
Turing Machines are a traditional tool to study computability and complexity
    of computational tasks.
We use the following definition of a Turing machine (\TM)
    given by Hopcroft~\cite{HopcroftU79} for deterministic \TM's with a single, unlimited tape:
\begin{definition}[Turing Machine]\label{def:dettm}
  A \emph{Turing Machine (\TM)} is a $7$-tuple, $(Q, \Sigma, \Gamma,\delta, q_0, \blank, F)$,
    where $Q$, $\Sigma$, $\Gamma$ are all finite sets and
  \begin{itemize}
      \item $Q$ is the set of states,
      \item $\Gamma$ is the tape alphabet,
      \item $\blank$ is the blank symbol and $\blank\in\Gamma$,
      \item $\Sigma$ is the input alphabet and $\Sigma\subseteq\Gamma\setminus\{\blank\}$,
      \item $\delta\colon Q\times\Gamma\rightarrow Q\times\Gamma\times\{\leftt,\rightt\}$ is the transition function,
      \item $q_0\in Q$ is the initial state,
      \item $F\subseteq Q$ is the set of final states. 
  \end{itemize}
\end{definition}
The input to a \TM is a finite sequence of symbols from the input alphabet.
Initially, the \TM receives its input string on the tape, followed by an infinite number of $\blank$ symbols. 
An \emph{execution} of a \TM on a specific input is performed in discrete \emph{steps}.  
In each step of an execution, the \TM reads a symbol from the tape at the position of the tape-head.
According to this symbol and the current state of the \TM, the transition function determines:
\begin{itemize}
  \item the next state, 
  \item the symbol that is written to the tape at the position of the tape-head, and
  \item whether to move the tape-head to the left or the right w.r.t. its current location. 
\end{itemize}
Every tape position that is not written by the \TM remains as it was at the beginning of the step.
As soon as the \TM reaches one of the final states in $F$, it \emph{terminates};  
    no further execution steps are performed. 
A \TM that executes infinitely many steps on input $x$ and never reaches a state in $F$ is \emph{non-terminating on $x$}.  

Let $M$ be a \TM that terminates on every input,
    then the \emph{time complexity} of $M$ is given by a function $T\colon\IN\rightarrow\IN$, 
    where $t(n)$ is the maximum number of execution steps that $M$ performs on any input of length $n$. 

We say a \TM computes the value $y\in (\Gamma\setminus\{\blank\})^*$ if it terminates with only $y$ on its tape.
\begin{definition}[Computable Functions]\label{def:compfunc}
    A function $f\colon\Sigma^*\rightarrow(\Gamma\setminus\{\blank\})^*$ is a \emph{computable function} if
      there is a \TM $M$, which terminates on every input $w\in\Sigma^*$ with just $f(w)$ on its tape.
    We say $M$ computes function $f$ and denote the output of $M$ applied on $x$ by $M(x)$. 
\end{definition}

When restricting the set of functions to $f\colon\{0,1\}^*\rightarrow\{0,1\}$, i.e., \emph{Boolean} functions with a single output bit, we identify $f$ with the set of strings in $\{0,1\}^*$ for which the function $f$ returns $1$, i.e., $L_f \coloneqq \{x \in \{0,1\}^* : f(x)=1\}$. 
We refer to such $L_f$'s as \emph{languages} or \emph{decision problems}. 
%
In this context, given an input string, a \TM \emph{accepts}, \emph{rejects}, or \emph{loops}, if it terminates and outputs a $1$ (True), respectively $0$ (False), on its tape, or does not halt. 
The \emph{language} \emph{recognized} by a \TM $M$, denoted by $L(M)$, is the collection of all strings accepted by $M$. For brevity, we refer to the language recognized by $M$, simply by the \emph{language of $M$}. 
%
%
Languages recognized by a \TM that terminates on \emph{every} input are \emph{decidable}. 

\begin{definition}[Recognizable and Decidable Languages]
    A language $\mathcal{L}$ is \emph{recognizable} if there is a \TM $M$ that recognizes it and $\mathcal{L}=L(M)$. A language $\mathcal{L}$ is \emph{decidable} if there is a \TM $M$ that terminates on every input and $\mathcal{L}=L(M)$.
\end{definition}
Decidable languages can be classified by the time complexity of a \TM deciding them in minimal time.
Denote by $\ttime$ the family of time complexity classes that classifies languages by their time requirements.
We make use of the definition given by Sipser~\cite{Sipser97}.~\footnote{In this paper we discuss \emph{determinsitc} \TM's~\ref{def:dettm}. Hence, time complexity and complexity classes are w.r.t.\ to deterministic machines, i.e., deterministic time~\cite[Def.~1.12]{AB09}. }
\begin{definition}[Time Complexity]
    Let $t\colon \IN \rightarrow \IR_{>0}$ be a function. 
    The time complexity class $\ttime(t(n))$ is defined to be the collection of all languages that are 
        decidable by a Turing machine with time complexity $\BO(t(n))$.    
\end{definition}

In the following paragraphs we define three types of \TM's; the Universal \TM, 
    the Universal \TM with a time-bound, and the Oblivious \TM.
\ifnum\arxiv=0
\paragraph*{Universal Turing Machines}
\else
\paragraph{Universal Turing Machines. }
\fi
A \TM that can simulate any other \TM $M$ on any given input $w$ is called a \emph{Universal \TM
 } (\UTM).
Such a \UTM receives an encoding of $M$ and $w$ as an input, denoted by $\langle M,w \rangle$.
It can simulate an execution of $M$ on $w$ and returns the output of $M$ on the tape.

\begin{definition}[Universal Turing Machine]
  Let $\langle M,w \rangle$ be the encoding of a \TM $M$ and an input $w$.
  A \emph{Universal \TM (\UTM)} receives input $\langle M,w \rangle$ on the tape and 
    simulates $M$ on $w$, 
  e.g., for decision problems: if $M$ accepts, then the \UTM accepts; if $M$ rejects, then the \UTM rejects. 
\end{definition}

%
In the following, we define a $\UTM$ that can simulate any other \TM on a given input for a predetermined number of execution steps, i.e., time-bounded. We focus on the following definition of a \emph{Universal Turing Machine with Time Bound} for decision problems. 
\begin{definition}[Universal Turing Machine with Time Bound~\cite{AB09}]\label{def:unitm-bounded}
    Given a \TM $M$, an input $w$, and $T \in \IN$, where the encoding of $M$, $w$ and $T$ 
        is denoted  by $\langle M,w,T \rangle$.
    A \emph{$\UTM$ with a Time Bound} receives input $\langle M,w,T \rangle$ 
        on the tape and simulates $M$ on $w$ for $T$ execution steps. 
    If $M$ accepts on input $w$ within $T$ steps, the $\UTM$ accepts; 
        if $M$ rejects on input $w$ within $T$ execution steps, the $\UTM$ rejects. 
    Otherwise, the $\UTM$ outputs some special failure symbol. 
\end{definition}
Efficient implementation of a Universal Turing Machine and a Universal Turing Machine with a Time-Bound, both of which we denote by \UTM, exists, i.e., if the simulated machine $M$ on input $x$ halts within $T$ steps, then $\UTM(\langle M, x \rangle)$ halts within $\BO(T\log T)$ steps~\cite[Thm.~1.9]{AB09}. 

\ifnum\arxiv=0
\paragraph*{Oblivious  Turing Machines}
\else
\paragraph{Oblivious  Turing Machines. }
\fi
A \TM is called \emph{Oblivious} iff its head movement depends only on the length of the input, not on the value; 
    it is formally defined as follows (paraphrased from~\cite{AB09}). 
\begin{definition}[Oblivious Turing Machine]\label{def:obliviousTM}
     A \TM\ $M$ is Oblivious if, for every input $x \in \{0,1\}^*$ and $i\in \IN$, the location of $M$’s tape-head at the $i$th step of execution on input $x$ is only a function of $|x|$ and $i$.
\end{definition}
The definition above implies that for every Oblivious \TM $M$, the execution time is $t(n)$ for every input of length $n$.
%
%
Every \TM with time complexity $t(n)$ can be simulated by an Oblivious \TM with time complexity $\BO(t(n)\log t(n))$~\cite[Ex.~1.6]{AB09}.

\subsection{Undecidability}
In Section~\ref{sec:computability}, we prove that the computation of the metastable closure (as defined below) of a function is undecidable. 
In what follows, we say that a set of encodings of \TM's is \emph{not trivial} if it contains some, but
not all \TM encodings, i.e., a set, is hence trivial if it is empty or includes all the \TM's encodings. 
To that end, we will make use of 
Rice's Theorem~\cite{Williams12}, which we define here. 
\begin{theorem}[Rice's Theorem]\label{thm:rice}
    Let $\mathcal{C}$ be a set of languages. 
    Consider the language $L_\mathcal{C}$ defined as follows 
    \begin{align*}
      L_\mathcal{C}\coloneqq\{\langle M \rangle | L(M)\in \mathcal{C}\}\,.
    \end{align*}
    Then either $L_\mathcal{C}$ is trivial or it is undecidable.
\end{theorem}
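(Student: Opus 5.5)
We prove the statement by reduction from the acceptance problem $A_{\TM}\coloneqq\{\langle M,w\rangle \mid M \text{ accepts } w\}$, which we take to be undecidable by the classical diagonal argument. The argument runs by contradiction. Assume $L_\mathcal{C}$ is non-trivial, i.e., there are machines $M_{\mathrm{in}}$ and $M_{\mathrm{out}}$ with $L(M_{\mathrm{in}})\in\mathcal{C}$ and $L(M_{\mathrm{out}})\notin\mathcal{C}$, and assume some \TM $D$ decides $L_\mathcal{C}$.

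\emph{Step 1 (normalization).} Let $M_\emptyset$ be a machine that rejects every input, so $L(M_\emptyset)=\emptyset$. If $\emptyset\in\mathcal{C}$, we replace $\mathcal{C}$ by its complement $\overline{\mathcal{C}}$ (with respect to the set of recognizable languages). This is harmless for two reasons. First, $L_{\overline{\mathcal{C}}}$ is the complement of $L_\mathcal{C}$ within the (decidable) set of valid \TM encodings, so it is decidable iff $L_\mathcal{C}$ is. Second, non-triviality is preserved, with the roles of $M_{\mathrm{in}}$ and $M_{\mathrm{out}}$ swapped. Hence we may assume $\emptyset\notin\mathcal{C}$, and we fix $M_1\coloneqq M_{\mathrm{in}}$ with $L(M_1)\in\mathcal{C}$.

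\emph{Step 2 (the reduction).} Given $\langle M,w\rangle$, we effectively construct the encoding of a machine $M'_{M,w}$. On input $x$, this machine first simulates $M$ on $w$ using a \UTM. If that simulation accepts, it then simulates $M_1$ on $x$ and accepts iff $M_1$ accepts. If $M$ rejects $w$, it rejects; if $M$ loops on $w$, it loops. Consequently
\begin{align*}
L(M'_{M,w})=\begin{cases} L(M_1) & \text{if } M \text{ accepts } w,\\ \emptyset & \text{otherwise.}\end{cases}
\end{align*}
The map $\langle M,w\rangle\mapsto\langle M'_{M,w}\rangle$ is computable, since it only hard-wires $w$ and the fixed encodings of $M$ and $M_1$ into a template machine. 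Therefore $\langle M,w\rangle\in A_{\TM}$ iff $L(M'_{M,w})\in\mathcal{C}$ iff $\langle M'_{M,w}\rangle\in L_\mathcal{C}$. Composing this map with $D$ yields a decider for $A_{\TM}$, which is a contradiction.

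\emph{Main obstacle.} The main obstacle is the undecidability of $A_{\TM}$ itself, if it is not taken as known. We would prove it by diagonalization. Suppose $H$ decides $A_{\TM}$, and define $N$ which, on input $\langle M\rangle$, runs $H$ on $\langle M,\langle M\rangle\rangle$ and outputs the opposite answer. Then $N$ on $\langle N\rangle$ accepts iff it does not accept, a contradiction. A second, minor point requiring care is the complement step: we must check that ``trivial'' in the sense of the paper (empty or containing all \TM encodings) matches the non-triviality we used, namely the existence of both $M_{\mathrm{in}}$ and $M_{\mathrm{out}}$. It does, because membership of $\langle M\rangle$ in $L_\mathcal{C}$ depends only on $L(M)$.
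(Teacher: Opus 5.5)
The paper does not prove this statement; it imports Rice's Theorem from \cite{Williams12} and only applies it, in the proof of Theorem~\ref{thm:notcomp}. Your argument is the standard textbook proof: complement $\mathcal{C}$ if necessary so that $\emptyset\notin\mathcal{C}$, then reduce the acceptance problem via the machine $M'_{M,w}$, whose language is $L(M_1)$ or $\emptyset$. It is correct as written, including the complementation step, the diagonal argument for undecidability of the acceptance problem, and the match with the paper's notion of ``trivial''.
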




\subsection{Exponential Time Complexity}
Our main focus is on exponentially complex problems, which we define by the
    exponential time complexity class $\exptime{}$, as given in~\cite{Sipser97}.
\begin{definition}[Exponential Time Complexity]\label{def:exptime}
  We denote by $\exptime{}$ the class of decidable languages in exponential time on a deterministic \TM. 
  In other words, 
  \begin{align*}
    \exptime=\bigcup_{p(n)}\ttime\left(2^{p(n)}\right),
  \end{align*}
  where p(n) is a polynomial in $n$.
  \end{definition}

A language $L$  is called \emph{$\exptime$-hard} if any problem in $\exptime$ can be reduced in polynomial time to $L$. 
If $L$ is a language in $\exptime$, too, then $L$ is called \emph{$\exptime$-complete}.
The following definition is paraphrased from~\cite{AB09}.
\begin{definition}[\exptime-Completeness]
    A language $L \subseteq \{0,1\}^*$ is polynomial-time \emph{Karp reducible} to a language $L' \subseteq \{0,1\}^*$, denoted by $L \leq_p L'$, if there is a polynomial-time computable function $f: \{0, 1\}^* \rightarrow \{0,1\}^*$ such that for every $x \in \{0,1\}^*$, $x \in L$ iff $f(x) \in L'$. We say that $L'$ is \emph{\exptime-hard} if $L \leq_p L'$ for every $L \in \exptime$. We say that $L'$ is \emph{\exptime-complete} if $L'$ is \exptime-hard and $L' \in \exptime$.
\end{definition}
\ifnum\arxiv=0
\paragraph*{The Bounded Halting Problem}
\else
\paragraph{The Bounded Halting Problem. }
\fi
In the following, we define a decision problem known to be \exptime-complete. 
This decision problem is known as the \emph{bounded halting problem}. 
The task is to determine if a given \TM terminates in a bounded time, 
    where the time-bound is encoded in Binary Encoding.
\begin{definition}[Bounded Halting Problem (\expbhp)~\cite{DK14}]
    Given a \TM\ $M$, a string $x \in \IB^*$, and $T \in \IN\setminus \{0\} $, 
        encoded in the Binary Encoding, 
        determine whether $M$ terminates on input $x$ in at most $T$ execution steps.
\end{definition}

The problem \expbhp is \exptime-complete, as stated below. 
\begin{lemma}[Proposition 3.30~\cite{DK14}]\label{lem:bhp_exptime}
    The problem \expbhp is \exptime-complete.
\end{lemma}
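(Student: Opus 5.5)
The statement to prove is \cref{lem:bhp_exptime}: \expbhp\ is \exptime-complete. The plan splits into membership and hardness.

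\emph{Membership.} Given $\langle M,x,T\rangle$, run a time-bounded universal machine (\cref{def:unitm-bounded}) on $M$ and $x$ for $T$ steps. Accept iff $M$ enters a final state within those steps. By the efficient simulation result of~\cite[Thm.~1.9]{AB09}, this costs $\BO(T\log T)$ steps, up to a polynomial factor in $|\langle M\rangle|$ for decoding the transition table. Since $T$ is written in binary, $T\le 2^{n}$ where $n$ is the input length. So the running time is $2^{\BO(n)}\cdot\poly(n)\subseteq 2^{p(n)}$ for a polynomial $p$, and \expbhp\ lies in \exptime\ (\cref{def:exptime}).

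\emph{Hardness.} Let $L\in\exptime$ be decided by a deterministic \TM\ $N$ in time at most $2^{p(n)}$, for some polynomial $p$ and large enough $n$. We may pad $p$ or absorb constants so that this bound holds for all $n$. Build a machine $N'$ that simulates $N$. Whenever $N$ would halt and accept, $N'$ enters a final state. Whenever $N$ would halt and reject, $N'$ instead enters a non-final state that loops forever, for instance by moving right indefinitely. Then $N'$ halts on $x$ iff $x\in L$, and when it halts it does so within $2^{p(|x|)}+c$ steps for a constant $c$.

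The reduction maps $x$ to $f(x)=\langle N',x,T_x\rangle$ with $T_x=2^{p(|x|)}+c$ written in binary. Note that $N'$ is a fixed machine independent of $x$, so its encoding is a constant-size string. The binary string for $T_x$ has $\BO(p(|x|))$ bits and is computable in polynomial time: write a $1$ followed by $p(|x|)$ zeros, then add $c$. Hence $f$ is polynomial-time computable. Moreover $x\in L$ iff $N'$ terminates on $x$ within $T_x$ steps iff $f(x)\in\expbhp$.

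\emph{Main obstacle.} The step needing most care is the bookkeeping that makes the equivalence exact. One must ensure that a halting-but-rejecting run of $N$ does not count as termination; this is exactly why rejection is converted into a loop. One must also handle the conventions of \cref{def:dettm}: termination means reaching $F$, and the input alphabet must be binary, which may require encoding $N$'s input in binary. Everything else is routine.
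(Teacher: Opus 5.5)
The paper does not prove this lemma; it cites it from \cite[Prop.~3.30]{DK14}. There is therefore no paper proof to compare against. Your argument is the standard textbook proof, and it is correct.

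Membership holds because $T<2^{n}$ when written in binary, so a time-bounded universal simulation runs in $2^{\BO(n)}\cdot\poly(n)$ steps. Hardness follows from the generic reduction $x\mapsto\langle N',x,2^{p(|x|)}+c\rangle$ with a fixed machine $N'$. Converting rejection into an infinite loop is exactly what makes ``halts within the bound'' equivalent to $x\in L$.

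One bookkeeping remark: your additive constant $c$ assumes that $N$ signals acceptance by its state. In the paper's model (Section~\ref{sec:tm}), a machine accepts by reaching $F$ with a $1$ on its tape. Locating that output cell after $N$ halts need not take constant time, since the head may be far from it. This is harmless. First normalize $N$ so that it halts in distinct accepting and rejecting final states, a standard transformation with only polynomial overhead, and enlarge $p$ accordingly. The rest of your reduction then goes through unchanged.
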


\subsection{Input and Tape Alphabet}\label{sec:inputoutputalpha}
The input and tape alphabet considered in this paper are $\Sigma=\{0,1,\mfu\}$ and $\Gamma=\{0,1,\mfu, \blank\}$.
It is well known~\cite[Coro.~7.10]{HopcroftU79},\cite[Claim~1.5]{AB09} that for every $\TM$ there is an equivalent $\TM$ that uses only Boolean values as the tape alphabet (with constant overhead to the time complexity). Hence, our choice of alphabets does not limit the generality of the considered $\TM$s, i.e., computationally, the additional symbol $\mfu$ is no different. 
In Section~\ref{sec:closure}, we attach a particular semantic to $\mfu$ in the context of the Metastable Closure computation of a Boolean function. 
As motivated in the introduction, we see the $\mfu$ symbol as the nonlogical value introduced to the input in its acquisition.  
The definition of the input alphabet well captures this "distorted" acquisition, i.e., the input might contain $\mfu$ bits, but the end of the input is well marked with a \blank\ symbol. 
In Section~\ref{sec:kleene}, we discuss $\TM$s that are amenable for implementation. In turn, the considered transition functions are restricted so that a circuit can realize them. On the one hand, again, one can think of this machine as if all its alphabets are Boolean (foregoing the \mfu\ symbol); on the other, it is helpful to consider the symbol $\mfu$ as a distorted bit input to hardware realized \TM (with Boolean alphabet only) and to analyze the behavior of the realized machine. 






\section{The metastable closure}\label{sec:closure}

This section formally defines the \emph{Metastable Closure} of a Boolean Function. 
We use the symbol $\mfu$ to represent a metastable input.
Let $\IT=\{0,\mfu,1\}$ be the ternary set of logical values.
The metastable closure, which we define first, is defined by two functions: the resolution function and the superposition function. 
The \emph{resolution} function produces the set of all stable strings that a (possibly) unstable string can resolve to by replacing each occurrence of a $\mfu$ by $0$ and $1$.
\begin{definition}[Resolution~\cite{friedrichs18containing}]
    The \emph{resolution} function is denoted by $\res\colon\IT^n\rightarrow\mathcal{P}(\IB^n)$.\footnote{
    We denote by $\mathcal{P}(S)$ the \emph{power set} of the set $S$.}
    For input $x\in\IT^n$ it is defined by 
    \begin{align*}
        \res(x)\coloneqq\{y\in\IB^n|\forall i \in\{1,\ldots,n\}\colon x_i\neq\mfu\Rightarrow y_i=x_i\}\,.
    \end{align*}
\end{definition}
Informally, $\mfu$ represents the uncertainty between bits $0$ and $1$, e.g., when the input was acquired and written on the input tape, for some bits, this acquisition was not successful, hence leaving a \mfu\ instead of a 0 or 1. Note that $\blank$ symbols read from the tape represent untouched tape positions.

Next, we define the \emph{superposition}, which produces a ternary string from
    a set of input strings of equal length.
The superposition produces a string that has a $\mfu$ in every position
    where the inputs do not match.

\begin{definition}[Superposition~\cite{friedrichs18containing}]
  The \emph{superposition} is denoted by the operator $\starr\colon\IT^n\times\IT^n\rightarrow\IT^n$.
  For two words $x,y\in\IT^n$ the superposition is
  defined on each index $i\in\{1,\ldots,n\}$:
  \begin{align*}
      (x\starr y)_i\coloneqq
      \begin{cases}
        x_i, & \text{if } x_i=y_i\,,\\
        \mfu, & \text{otherwise}\,.
      \end{cases}
  \end{align*}
  As the superposition is associative, it can be naturally extended 
  to sets of words.
  \begin{align*}
    \starr \{a_0,\ldots,a_k\} \coloneqq a_0 \starr \ldots \starr a_k\,,
  \end{align*}
  where $k\in\IN$. 
\end{definition}

The \emph{metastable closure} of a Boolean function $f:\IB^n \rightarrow \IB^m$ on input $x$, denoted by $f_\mfu(x)$, 
    is computed as follows.
Apply $f$ to every possible resolution of $x$, then compute the superposition
    of all results.
We denote the application of a function $f$ to 
    every element of a set $A \subseteq \IB^n$ by $f(A)$, 
    i.e.,  $f(A) \coloneqq \{f(a) \mid a \in A\}$.

\begin{definition}[Metastable Closure~\cite{friedrichs18containing}]\label{def:closure}
  The \emph{metastable closure (\MC)} of a Boolean function 
    $f\colon\IB^n\rightarrow\IB^m$ is denoted by 
    $f_{\mfu}\colon\IT^n\rightarrow\IT^m$, for $n,m\in\IN$.
  It is defined by 
  \begin{align*}
    f_{\mfu}(x)\coloneqq\bigstarr f(\res(x))\:.
  \end{align*}
\end{definition}
\begin{table}
  \caption{Truth table of basic gates $\AND_\mfu$, $\OR_\mfu$, and $\NOT_\mfu$.}
  \label{tab:gates}
  \centering
  \begin{tabular}{c|ccc}
    $\AND_\mfu$ & 0 & 1 & \mfu\\ \hline
    0 & 0 & 0 & 0\\
    1 & 0 & 1 & \mfu\\
    \mfu & 0 & \mfu & \mfu
  \end{tabular}
  \qquad
  \begin{tabular}{c|ccc}
    $\OR_\mfu$ & 0 & 1 & \mfu\\ \hline
    0 & 0 & 1 & \mfu\\
    1 & 1 & 1 & 1\\
    \mfu & \mfu & 1 & \mfu
  \end{tabular}
  \qquad
  \begin{tabular}{c|c}
    $\NOT_\mfu$ &\\ \hline
    0 & 1\\
    1 & 0\\
    \mfu & \mfu
  \end{tabular}
\end{table}
\begin{remark}
    This paper considers \TM's that recognize Boolean functions $f\colon\IB^*\rightarrow\IB$. Definition~\ref{def:closure} extends naturally to this setting of varying input length, as the resolution is taken over strings of the same length at the input.  
\end{remark}
In \cref{tab:gates}, we define the \MC\ of basic logic operands 
    $\AND$, $\OR$, and $\NOT$, denoted by $\AND_\mfu, \OR_\mfu$ and $\NOT_\mfu$, respectively. The definition of the basic logic operands shows that even if an input is metastable, 
    the output can be stable; 
    $\AND_\mfu(0,\mfu)$ produces a $0$ at the output, hence masking the uncertainty at the input. We elaborate on the ternary logic obtained by using these extended gates in \ifnum\supp=0 Section~\ref{sec:kleenecomb}\else the supplementary material\fi. 

    In the rest of the paper, unless said otherwise, Boolean functions considered are with a \emph{single} output bit. 
\section{ The non computability of the Metastable Closure }\label{sec:computability}

In this section, we prove that, in general, computing the \MC is non-computable.
%

Define $\FMC$, the \TM that computes the \MC;
the computational task where the input is an encoding of a Boolean function $f:\IB^*\rightarrow \IB$ and a string $x \in \IT^*$, and the output is its \MC $f_{\mfu}(x)$. 
More formally, let $f$ be a Boolean function, and let $\langle M_f \rangle$ be the encoding of a \TM $M_f$ that recognizes it.  
Let $\FMC(\langle M_f \rangle,x)$ be the function that gets an encoding of a \TM  that recognizes a Boolean function $f$ and a string $x \in \IT^n$, and returns $f_{\mfu}(x)$.

In the following, we prove that the function $\FMC$ is non-computable. In the proof, we assume, towards a contradiction,  that it is computable and show that this enables us to decide an undecidable language. 

\notcomp*

%
\begin{proof}
Assume towards a contradiction that $\FMC$ is computable. We will show that it can be used to decide an undecidable language.  

Let ${\cal A}_k$ be the language of (encodings of) $\TM$s where each of the machines accepts all the inputs of the same length $k$ for some $k \in \IN\cup\{0\}$, that is, 
\[ 
    {\cal A}_k \coloneqq \{\langle M\rangle \mid  ~\forall x \in \{0,1\}^k : M \text{ accepts } x \}\:.
\]
Rice Theorem (Theorem~\ref{thm:rice}) implies that ${\cal A}_k$ is undecidable since we can rewrite ${\cal A}_k$ as $L_{\mathcal{C}_k}$, as follows:
\begin{align*}
    &{\cal C}_k\coloneqq \{L \subseteq \{0,1\}^* \mid \{0,1\}^k \subseteq L\}, \\
    &L_{\mathcal{C}_k}\coloneqq\{\langle M \rangle | L(M)\in \mathcal{C}_k\}\:.
\end{align*}
The claim follows since $L_{\mathcal{C}_k}$ is nontrivial. 

We now show that ${\cal A}_k$ can be decided, assuming that $\FMC$ is computable, i.e., there is a $\TM$ $H$ that computes $\FMC$. 

We construct a new $\TM$ $D$ in Algorithm~\ref{alg:tmdec} that decides ${\cal A}_k$, that is, upon receiving an encoding of a $\TM$ $M$ decides whether $M \in {\cal A}_k$ or not. The new $\TM$ calls $H$ as a subroutine to determine the output of $M$ on input $\mfu^k$,  where $\mfu^k$ is the word $\mfu\ldots\mfu$ of length $k$.
If $H(\langle M \rangle,\mfu^k)$ outputs a $\mfu$ or $0$, then $D$ rejects, otherwise $D$ accepts.

\begin{algorithm}
    \caption{\textbf{Turing Machine} $D$}
    \begin{algorithmic}[1]
        \State \textbf{Input:} $\langle M \rangle$, where $M$ is a Turing Machine
        \State Execute $H$ on input $(\langle M \rangle, \mfu^k)$
        \If{$H$ outputs $1$}
            \State \textbf{Accept}
        \Else
            \State \textbf{Reject}
        \EndIf
    \end{algorithmic}\label{alg:tmdec}
\end{algorithm}
    When executing $H$ on input $(\langle M \rangle, \mfu^k)$, we 
        essentially execute $M$ on every possible resolution of $\mfu^k$.
    All resolutions of $\mfu^k$ are the set of all binary
        strings of length $k$.
    Thus, if the output of $M$ accepts all resolutions, then the superposition equals $1$.  
    The $\TM$ $D$ accepts if a given input $\TM$ $M$ accepts every input of length $k$.

    Hence, as required, $\FMC$ is non-computable. 
\end{proof}
\section{The complexity of the Metastable Closure of exponential problems}\label{sec:exptime}


Given that, in general, the \MC is non-computable, in this section, we focus on the \MC of computable problems. 
We prove that given an exponential time problem with a single uncertain bit, deciding whether it has a unique resolution is \exptime-complete. 

We consider first a refinement of the bounded halting problem $\pexpbhp$, which decides if a certain \TM halts on a given input within $k$ steps, where $k$ is encoded in binary and is of the form $1z1$ where $z\in\{0\}^*$. We prove that this refined bounded halting problem retains the complexity of the original bounded halting problem, i.e., it is  \exptime-complete. 
%
We then consider the main computational problem of this section $\sizeMC$,  where there is a single $\mfu$-bit in the encoding of its instances, and prove that it is \exptime-complete by a reducing it to $\pexpbhp$. 

\ifnum\arxiv=0
\medskip
\noindent
\textbf{A Refined Bounded Halting Problem Formulation. }
\else
\paragraph{A Refined Bounded Halting Problem Formulation. }
\fi
First, we define the \emph{Bounded Halting Problem with a time bound encoded in the binary encoding of the form $10^*1$}  (\pexpbhp), i.e., the time-bound is in $\{ 2^i+1 \mid i \in \IN \setminus \{0\}\}$.
\begin{definition}[\pexpbhp]\label{def:pexpbhp}
    Given a \TM\ $M$, a string $x \in \IB^*$, and $k \in \{ 2^i+1 \mid i  \in \IN\setminus \{0\}\}$, encoded in the Binary Encoding, determine whether $M(x)$ halts in $k$ execution steps.
\end{definition}

We now prove that \pexpbhp\ is also \exptime-complete. 
Essentially, the proof works by constructing for every \expbhp\ problem with bound $k$, a \pexpbhp problem with bound $k'\geq k$ such that $k'\in\{2^i+1 \mid k \leq 2^i+1, i \in \IN\setminus \{0\}\}$.
\begin{lemma}\label{lem:powerplusone}
    The problem \pexpbhp\ is \exptime-complete.
\end{lemma}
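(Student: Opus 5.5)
We prove the two halves of \exptime-completeness separately: membership in \exptime by direct simulation, and hardness by a polynomial-time Karp reduction from \expbhp, which is \exptime-complete by Lemma~\ref{lem:bhp_exptime}.

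\emph{Membership.} Given $\langle M,x,k\rangle$ with $k$ of the form $2^i+1$, we run a time-bounded \UTM (Definition~\ref{def:unitm-bounded}) on $\langle M,x,k\rangle$ for $k$ steps. This takes $\BO(k\log k)$ steps. Since $k$ is written in binary, $k\le 2^{n}$ for input length $n$, so the total running time is $2^{\BO(n)}$. Hence \pexpbhp $\in$ \exptime. It is in fact a syntactic restriction of \expbhp, with a polynomial-time check that $k$ has the form $10^*1$.

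\emph{Hardness.} Given an \expbhp instance $\langle M,x,T\rangle$, let $i$ be the smallest positive integer with $2^i+1\ge T$, and set $k'\coloneqq 2^i+1$. Then $k'\le 2T+1$, so $k'$ has $\BO(\log T)$ bits and is computable in polynomial time.

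We cannot simply output $\langle M,x,k'\rangle$: $M$ may halt after more than $T$ but at most $k'$ steps, which would give a false positive. We therefore modify $M$ into a machine $M'$ whose halting time is controlled. The choice we commit to is to make halting exact: $M'$ halts within $k'$ steps iff $M$ halts within $T$ steps. To achieve this, $M'$ simulates $M$ while keeping a binary step counter initialised to $T$, with $T$ hardwired into the construction of $M'$ in $\BO(\log T)$ states or written on the tape as part of the new input $x'=\langle x,T\rangle$. The counter is decremented per simulated step, and if it reaches zero before $M$ halts, $M'$ enters an infinite loop. If $M$ halts in time, $M'$ halts as well.

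The main obstacle is the overhead of this simulation. Counter maintenance and shuttling the counter alongside the head cost roughly $\BO(T\log T)$ steps, so $M'$'s halting time can exceed $k'$, which is only about $2T$. The remedy is to choose $k'$ as a power of two plus one exceeding a known polynomial upper bound $p(T)$ on $M'$'s running time whenever $M$ halts within $T$ steps, for instance $p(T)=cT\log T+|x|$. Correctness then rests on the fact that if $M$ does not halt within $T$ steps, $M'$ never halts, so any finite bound $k'$ works in that direction; only the ``yes'' direction needs $k'\ge p(T)$. Such a $k'$ still has $\BO(\log T+\log|x|)$ bits, so the reduction remains polynomial. The step to verify carefully is that the simulation overhead bound $p$ is explicit and uniform in $M$. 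We would rely on the efficient time-bounded \UTM result cited after Definition~\ref{def:unitm-bounded}, letting $M'$ be that \UTM with $\langle M,x,T\rangle$ as its input, modified to loop rather than output the failure symbol.
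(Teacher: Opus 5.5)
Your proposal is correct and follows the paper's route. Membership is inherited from \expbhp\ (you simulate directly; the paper just observes that \pexpbhp\ is a restriction of \expbhp). Hardness reduces \expbhp\ to \pexpbhp\ by wrapping $M$ into a clocked machine $M'$ that halts if $M$ halts within the original bound and otherwise enters a looping sink state, with the bound rounded up to the form $2^i+1$.

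The one substantive difference is the choice of the new bound, and here your version is the more careful one. The paper takes $k'=\min\{2^i+1 \mid k\le 2^i+1\}$ and argues that since $M(x)$ halts within $k\le k'$ steps, the mapped instance is accepted. This silently identifies the running time of $M'$ with that of $M$. But $M'$ must maintain a step counter, and a straightforward single-tape implementation that shuttles the counter alongside the head costs on the order of $k\log k$ steps. That can exceed $k'\le 2k+1$. Your choice of $k'$ of the form $2^i+1$ above an explicit bound $p$ on the running time of $M'$ closes exactly this gap. Your observation that the ``no'' direction holds for every finite $k'$, since $M'$ then never halts, is what makes enlarging $k'$ harmless.

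One small correction to your closing remark: $p$ does not need to be uniform in $M$. It suffices that it be computable in polynomial time from $\langle M,x,T\rangle$, for instance of the form $\mathrm{poly}(|\langle M\rangle|)\cdot T\log T$ plus lower-order terms. Then $k'$ still has only $\BO(\log T+\log|\langle M\rangle|+\log|x|)$ bits and the reduction stays polynomial.
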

\begin{proof}
    Let $L(\expbhp)$ and $L(\pexpbhp)$ denote the languages that correspond to the \expbhp\ and \pexpbhp\ problems, respectively. In order to prove that \pexpbhp\ is \exptime-hard it is sufficient to  prove that  $L(\expbhp) \leq_p L(\pexpbhp)$, that is, 
    we need to prove that there is a polynomial-time computable function $f: \{0, 1\}^* \rightarrow \{0,1\}^*$ such that for every $x \in \{0,1\}^*$, $x \in L(\expbhp)$ iff $f(x) \in L(\pexpbhp)$.
    
    The mapping $f$ is as follows. Given a \TM\ $M$, a string $x \in 
    \IB^*$, and $k \in \IN\setminus \{0\} $, encoded in the Binary Encoding, the mapping $f$ outputs the \TM\ $M'$, $x \in \IB^*$, and $k' = \min\{2^i+1 \mid k \leq 2^i+1, i \in \IN\setminus \{0\}\}$, where $M'$ is defined as follows:
    
    \begin{algorithm}
    \caption{\textbf{Turing Machine} $M'$}
    \begin{algorithmic}[1]
        \State \textbf{Input:} $x \in \{0,1\}^*$
        \State Execute $M$ on input $x$ for $k$ steps
        \If{$M$ terminates}
            \State \textbf{Return} $M(x)$
        \Else
            \State Enter a new sink state where the transition function loops forever
        \EndIf
    \end{algorithmic}
\end{algorithm}

    Clearly, the above mapping is polynomial.
    In case $\langle M,x,k \rangle \in L(\expbhp)$, $M(x)$ halts within $k$ steps, it halts in $k' \geq k$ steps.
    Otherwise, in case $\langle M,x,k \rangle \not\in L(\expbhp)$, $M(x)$ does not halt within $k$ steps and hence enters the new state that loops forever. In turn, $M'(x)$ does not enter the halting state within $k'$ execution steps, as required. 
    The problem \pexpbhp\ is in  $\exptime$ since \expbhp\ is an $\exptime$ problem. This shows that \pexpbhp\ is \exptime-hard.
    
    The \exptime-completeness follows from noting that every $\langle M,x,k \rangle \in L(\pexpbhp)$ is a sub-language  $L(\expbhp)$. 
    As $L(\expbhp)$ is \exptime-complete (c.f.\ \ref{lem:bhp_exptime}), the claim follows.
\end{proof}

\ifnum\arxiv=0
\medskip
\noindent
\textbf{The language \sizeMC. }
\else
\paragraph{The language \sizeMC. }
\fi
We now define a computational problem which we refer to by $\sizeMC$. A YES instance, i.e., $\langle M,x,k \rangle \in L(\sizeMC)$, if (and only if) $x$ contains {\em a single metastable bit} and there are two resolutions of the instance that yield different simulated outputs, e.g., ``not halting'' is different than $1$ (see Definition~\ref{def:unitm-bounded}).   

We prove that $L(\sizeMC)$ is $\exptime$-complete. Compare this with the trivial language that is obtained by removing the above-mentioned single uncertain bit, i.e., when all bits are stable, there is a single resolution, hence a \TM\ that decides the language always rejects.   
Moreover, this $\exptime$ completeness implies that a \TM\ that computes the MC of the corresponding decision Boolean function must run for an exponential time. 

We now formally define the $\sizeMC$ problem and its corresponding language.
\begin{definition}[\sizeMC]\label{def:expdetect}
    Let $M$ be a \TM defined on $\IB^*$, $x \in \IT^*$ is a ternary string, 
        where $x$ has exactly one $\mfu$ bit, and $k \in \IN\setminus\{0\}$ is a natural number encoded in the Binary representation. 
    Let $x_1, x_2$ be the resolutions of $x$, such that $\res(x)=\{x_1,x_2\}$.
    The task is to determine whether, after $k$ execution steps of $M$ on inputs $x_1$ and $x_2$, $M$ yields different outputs, i.e., two executions terminate and output different outputs, or one execution does not terminate.
    The language associated with this decision problem is:
    \begin{align*}
        L(&\sizeMC) \coloneqq\{\langle M, x, k \rangle |  M\text{ is a }\TM \text{ defined on }\IB^*, \\
        & x \in \IT^* , |\{x_i=\mfu \mid i \in \IN\}|=1 , k \in \IN\setminus\{0\},\\ 
        & \exists x_1,x_2\in\res(x): \UTM(\langle M,x_1,k\rangle)\neq \UTM(\langle M,x_2,k \rangle) \\
        \}\,.
    \end{align*}
\end{definition}
Note that the bound on the running time $k$ is deterministic, because no $\mfu$ bits are introduced in this part of the instance. 
The uncertainty only affects the input $x$.
Furthermore, note that the executions of $M(x_1)$ and $M(x_2)$ can terminate with the same output, a different output, only one terminates with an output, or neither terminates. 
The language $L(\sizeMC)$ only accepts instances where only one (either $M(x_1)$ or $M(x_2)$) terminates, or $M(x_1)$ and $M(x_2)$ terminate with different output, i.e., when the simulations $\UTM(\langle M,x_1,k\rangle)$ and $\UTM(\langle M,x_2,k \rangle)$ have a different outcome.
Informally, the language captures the instances, where uncertainty in the input causes uncertainty in the output, i.e., detecting a $\mfu$ bit in the output.
Similarly to $L(\sizeMC)$ we can define $L(\sizeMCm)$, where $m$ denotes the number of uncertain bits.

\ifnum\arxiv=0
\medskip
\noindent
\textbf{The problem \sizeMC is \exptime-complete. }
\else
\paragraph{The problem \sizeMC is \exptime-complete. }
\fi
\sloppy
Here we prove that the language $L(\sizeMC)$  is \exptime-complete, by reducing $\pexpbhp$ problem to the language $L(\sizeMC)$.
\exphard*
\begin{proof}
    In order to prove that $L(\sizeMC)$  is \exptime-hard it is sufficient to  prove that  $L(\pexpbhp) \leq_p L(\sizeMC)$, that is, 
    we need to prove that there is a polynomial-time computable function $f:\IB^* \rightarrow \IT^*$ such that for every $x \in \{0,1\}^*$, $x \in L(\pexpbhp)$ iff $f(x) \in L(\sizeMC)$.
    %
    %
    Recall from Definition~\ref{def:pexpbhp} that $x\in L(\pexpbhp)$ is composed of an encoding of \TM $M$, an input string $s$, and an encoding of the step counter $k$. 
    The mapping $f$ works as follows:
    (1)~define the execution step counter $\ell$, such that $\ell \leftarrow k+1$, 
    (2)~define $k'$, by replacing the leftmost bit in (the encoding of) $k$ by a $\mfu$,
    (3)~define a \TM $M'_{M,s}$ in Algorithm~\ref{alg:tmexp} that invokes $M$ on the input string $s$ for $\ell-1=k$ time steps.
    Through the mapping, $k'$ becomes the input of $M'_{M,s}$, and $\ell$ is the time bound, such that $f(\langle M,s,k \rangle)=\langle M'_{M,s},k',\ell \rangle$.
    
    Informally speaking, the idea is that $M'_{M,s}$ outputs, within $\ell$ steps,  two different outputs for the two resolutions of $k'$ if $M(s)$ halts in $k$ steps.  
    If $M(s)$ does not halt in $k$ steps, then $M'_{M,s}$ outputs a stable bit. In our case a $0$. In turn, the mapped instance is not in $L(\sizeMC)$ in this case.

    We now elaborate on parts (2) and (3) of the mapping function $f$. 
    Part~(2): Let $y$ be the binary encoding of $k$, such that for each instance of $x\in L(\pexpbhp)$, 
        $y$ has the form $1z1$, where $z\in \{0\}^*$.
    We copy each bit of $y$ to $k'$, except for the first one, replacing it by a $\mfu$.
    Hence, $k'$ has the form $\mfu0^*1$, such that $\res(k')$ only includes two binary strings of the form $00^*1$ and $10^*1$.

    Part~(3): We now elaborate on the mapping of $M$ to $M'_{M,s}$.
    The mapping is defined by a \TM in Algorithm~\ref{alg:tmexp}, which first simulates $M$ on input $s$ for $\ell-1$ steps by applying a \UTM first.
    If the \UTM stops in $\ell-1$ steps, then $M'_{M,x}$ is supposed to output a different output on every resolution of $k'$. 
    Otherwise, if the \UTM does not stop, then $M'_{M,x}$ should have an arbitrary, but consistent, output in $\IB$ for each resolution of the input $k'$. Here we choose $0$.
    Formally, the mapping to $M'_{M,x}$ is given by the following \TM:

\begin{algorithm}
    \caption{\textbf{Turing Machine} $M'_{M,s}$}
    \begin{algorithmic}[1]
        \State \textbf{Input:} $y, \langle \ell \rangle \in \{0,1\}^*$
        \State \textbf{Execute} $\UTM(\langle M, s, \ell - 1 \rangle)$ \Comment{Recall that $k=\ell-1$.}
        \If{$M$ on input $s$ halts within $\ell - 1$ steps}
            \If{$y \in 00^*1$}
                \State $o \gets 1$\label{step:5}
            \Else
                \State $o \gets 0$\label{step:7}
            \EndIf
        \Else
            \State $o \gets 0$ 
        \EndIf
        \State \textbf{Return} $o$
    \end{algorithmic} \label{alg:tmexp}
\end{algorithm}

    We now show that the above mapping satisfies for every $x \in \{0,1\}^*$, $x \in L(\pexpbhp)$ iff $f(x) \in L(\sizeMC)$.
    In case $x=\langle M,s,k\rangle\in L(\pexpbhp)$, the \TM $M$ halts on input $s$ within $k$ steps. 
    Thus, the simulation on Line~1 at $M'_{M,s}$, $\UTM(\langle M,s,\ell-1 \rangle)$, reports that $M$ stops in $\ell-1=k$ steps. 
    Note that the encoding of $k'$ has two resolutions, one resolution that encodes $k$ and one resolution that encodes $1$. 
    Hence, in Steps~\ref{step:5}-\ref{step:7} of Algorithm~\ref{alg:tmexp}, the variable $o$ is assigned a `$1$' or a `$0$' depending on the resolution, i.e., there exists $k_1,k_2\in\res(k'): \UTM(\langle M'_{M,s},k_1,\ell\rangle)\neq \UTM(\langle M'_{M,s},k_2,\ell\rangle)$.
    Hence, $f(x)=\langle M'_{M,s},k',\ell \rangle \in L(\sizeMC)$.
    Otherwise, when $x=\langle M,s,k\rangle\notin L(\pexpbhp)$, the simulation $\UTM(\langle M,s,\ell-1 \rangle)$ will stop after $\ell-1=k$ execution steps of $M$ and report that $M$ on input $s$ did not stop.  
    Hence, the \TM $M'_{M,s}$ will output $0$ for each resolution of $k'$, such that for all $k_1,k_2\in\res(k'): \UTM(\langle M'_{M,s},k_1,\ell\rangle) = \UTM(\langle M'_{M,s},k_2, \ell \rangle) = 0$.
    Hence, $f(x)=\langle M'_{M,s},k',\ell\rangle \notin L(\sizeMC)$.
    
    The above mapping is polynomial.

    The problem \sizeMC\ is in  $\exptime$ since given an instance $\langle M, x, k \rangle$ one can simulate the two executions of $\UTM(\langle M, x_1, k \rangle)$ and $\UTM(\langle M, x_2, k \rangle)$ with a running time of $\BO(k\log k)$ (see the text following  Definition~\ref{def:unitm-bounded}) which is at most exponential in the encoding length of $k$. 
\end{proof}

Theorem~\ref{thm:exphard} proves that any \TM that decides $L(\sizeMC)$ has a running time that is exponential in its input length even though it contains a single uncertain bit. Observe that a \TM that computes the metastbale-closure of this decider also decides $L(\sizeMC)$: if for a given instance, the two evaluations that correspond to the two resolutions at hand yield two different values, then the \MC of this computation yields a $\mfu$ (or an encoding of it) in the output. Otherwise, if the evaluations result in the same values (e.g., both do not terminate), then the output does not contain a $\mfu$. In turn, an instance is a YES instance if and only if the decider outputs a $\mfu$. Moreover, M-containing implementation of a universal Turing machine is also a decider of $\sizeMC$ as one applies it on $\langle M, x, k \rangle$ and accepts or rejects according to the universal Turing machine's output.

\begin{remark}
        %
        We show in Section~\ref{sec:simulation}, that there is an \MC universal \TM that performs an exponential number of steps for instances that halt, e.g., bounded execution of $k$ steps of the respective machine.  
        Moreover, our suggested \MC universal \TM is oblivious and has a (strong) restriction on the transition functions allowed. This restriction enables a rather easy realization in hardware of this machine, i.e., by using combinational circuits and memory modules (see Sections~\ref{sec:kleene},\ref{sec:simulation}). 
        %
\end{remark}
\section{Computing the size of the \IRS\ of a polynomial problem in \PP\ is \CONP-complete}
\label{sec:coNP}
In the previous section, we proved that computing the size of the \IRS\ of an exponential time problem with a single uncertain bit is \exptime-complete. In contrast, the situation is very different for a \emph{polynomial time}  problem, i.e., a problem in \PP. Here, resolving a small set (of logarithmic size)  of uncertain bits is in \PP, while resolving \emph{any number} of uncertain input bits, as we prove in this section, is \CONP-complete.

For the first fact, note that if we have an $O(\log n)$ uncertain bits, out of $n$ input bits, we can simulate all $2^{O(\log n)}$ possibilities, which is polynomial in $n$. Computing each possible resolution is polynomial, so overall we end up with a polynomial number of computations to compute the \IRS. 

The \CONP-complete language \Tautology\ (e.g.,~\cite [Ex.~2.21]{AB09}) is the language of all (encoding of) formulas that evaluate to $1$ for all Boolean inputs.~\footnote{A \emph{formula} is a combinational circuit where each gate's output feeds a single input port, i.e., every gate has a fan-out of $1$. } 

%
\ifnum\arxiv=0
\textbf{The language \sizearbMC. }
\else
\paragraph{The language \sizearbMC. }
\fi
Analogously to \sizeMC, we now define a computational problem denoted as $\sizearbMC$. On the one hand, we eliminate the assumption regarding the number of $\mfu$ bits in the ternary input $x$. On the other hand, we introduce the constraint that the step counter must be polynomial in the length of $x$.

\begin{definition}[\sizearbMC]\label{def:exparbdetect}
    Let $M$ be a \TM defined on $\IB^*$, $x \in \IT^*$ is a ternary string, where $c \in \IN\setminus\{0\}$ is a constant encoded in the binary representation.
    The task is to determine whether there are inputs $x_1, x_2 \in \res(x)$ s.t. after $|x|^c$ execution steps of $M$ on $x_1$ and on $x_2$, $M$ yields different outputs, i.e., two executions terminate and output different outputs, or one execution does not terminate.
    The language associated with this decision problem is:
    \begin{align*}
        L(&\sizearbMC) \coloneqq\{\langle M, x, c \rangle |\\  & M\text{ is a }\TM \text{ defined on }\IB^*, \\
        & x \in \IT^* , c \in \IN\setminus\{0\}, \\ 
        & \exists b\in \IB ~\forall x' \in\res(x): 
        \UTM (\langle M,x',|x|^c\rangle) = b
        \}\,.
    \end{align*}
\end{definition}

\medskip
\noindent

%
\ifnum\arxiv=0
\textbf{The problem \sizearbMC is \CONP-complete. }
\else
\paragraph{The problem \sizearbMC is \CONP-complete. }
\fi
Here we prove that the language $L(\sizearbMC)$  is \CONP-complete, by reducing the $\Tautology$ language to the language $L(\sizearbMC)$.

\coNP*
\begin{proof}
    To prove that $L(\sizearbMC)$  is \CONP-hard it is sufficient to prove that  $\Tautology \leq_p L(\sizearbMC)$, that is, we need to prove that there is a polynomial-time computable function $f:\IB^* \rightarrow \IT^*$ such that for every formula $\tau$, $\langle\tau\rangle \in \Tautology$ iff $f(\langle\tau\rangle) \in L(\sizearbMC)$.
    %
    %

    The mapping $f$ works as follows:
    (1)~\emph{Mapping to $M$ and $c$:} $M$ is the  \TM $M$ defined in Algorithm~\ref{alg:TMcoNP} that gets as an input an encoding of a formula $\tau$ and a corresponding Boolean input $y$. The \TM $M$ invokes another \TM which we refer to as $M_{\eval}$ that evaluates the input formula $\tau$ on the input $y$. The formula evaluation procedure is polynomial in the formula's size (see e.g.,~\cite[Sec.~6.3]{digirig2012}); hence, there is a constant $c$ such that the evaluation of the formula $\tau$ on $y$ ends within $(|\tau|+|y|)^c$ steps. Furthermore,  $M$ completes its computation within $(|\tau|+|y|)^{c'}$ steps for a constant $c' \geq c$, and 
    (2)~\emph{Mapping to $x$:} Let $n$ denote the number of inputs to the formula $\tau$, then define $x$ to be the encoding of $\tau$ concatenated to $\mfu^n$. Note that in all resolutions of $x$ the encoding of $\tau$ does not change. In turn, $f(\langle\tau\rangle) = \langle M, \tau \circ \mfu^n, c'  \rangle$. 
    

    We now elaborate on part (1) of the mapping function $f$. 
    The mapping is defined by a \TM, which first simulates the formula $\tau$ on the all $0$'s input (see Step~\ref{step:contradiction} in Algorithm~\ref{alg:TMcoNP}). This step (and steps that follow it) comes to enforce $2$ different outputs in the case where $\tau$ is a \emph{contradiction} (i.e., a formula that evaluates to $0$ on all inputs). In case of such a formula, Step~\ref{step:contradiction} makes sure that there are two inputs of length $n$ that result in two different outcomes. Otherwise, $M$ returns $\tau$'s evaluation on the input $y$ by simulating the $M_\eval$ \TM on the input $\langle \tau, y\rangle$. In turn, the evaluation always halts within the predetermined polynomial time and outputs the same output as the evaluation of the formula $\tau$. 

\begin{algorithm}
\caption{\textbf{Turing Machine} $M$}
    \begin{algorithmic}[1]
        \State \textbf{Input:} a formula $\tau$ with $n$ inputs and $y \in \{0,1\}^n$
        \State \textbf{Execute} $\UTM(\langle M_{\eval}, \tau, 0^n \rangle)$
        \If{$M_{\eval}$ on input $\langle \tau, 0^n\rangle$ returns $0$} \Comment{Making sure that contradictions do not yield $|\tau(\res(x)|=1$. }\label{step:contradiction}
            \If{$y = 0^n$}
                \State $o \gets 0$
            \Else
                \State $o \gets 1$
            \EndIf
        \Else
            \State $o \gets \UTM(\langle M_{\eval}, \tau, y  \rangle)$\label{step:taut}
        \EndIf
        \State \textbf{Return} $o$ 
    \end{algorithmic}
\label{alg:TMcoNP}  \end{algorithm}
    
    Consequently, if $\langle\tau\rangle \in \Tautology$, then by Steps~\ref{step:contradiction},~\ref{step:taut}, and since $|y| = |x|$  for any $y \in \res(x)$, it follows that $\forall y \in \res(x)$ it holds that $\UTM (\langle M,y,(|\tau|+|x|)^{c'}\rangle) = 1$. Furthermore,  if $\langle\tau\rangle \not\in \Tautology$, then by Step~\ref{step:contradiction}  $\exists y_1,y_2 \in \res(x)$ s.t.  $\UTM (\langle M,y_1,(|\tau|+|x|)^{c'}\rangle) \neq  \UTM (\langle M,y_2,(|\tau|+|x|)^{c'}\rangle)$.
    
    
    The above mapping is polynomial.

    The problem \sizearbMC\ is in  $\CONP$ since its complement problem is in \NP, that is, given an instance $\langle M, x, c \rangle$ and the witness $x_0 \in \IT^*$ and $x_1,x_2 \in \res(x_0)$ one can simulate the two executions of $\UTM(\langle M, x_1, |x_0|^c \rangle)$ and $\UTM(\langle M, x_2, |x_0|^c \rangle)$ with a polynomial running time and verify that indeed these two executions result with different outcomes.~\footnote{Note that $|x|^c$ is encoded in the binary encoding by $\BO(c\log|x|)$ bits. The overall simulation is polynomial in $|x|^c$, which is polynomial in $|x|$.} 
\end{proof}

\section{Representing metastable Turing Machines using the Kleene Logic}\label{sec:kleene}\label{sec:tmkleene}
This section deals with Turing Machines, whose transition function is restricted to \emph{Natural functions}, i.e., functions that can be implemented by basic gates extended to their \MC versions. These functions generalize standard Boolean functions to the ternary Kleene logic, such that for a Natural function $f :\IT^n \rightarrow \IT$ it holds that $f(\{0,1\}^n) \subseteq \{0,1\}$ and that uncertainty is monotone, i.e., increasing the number of uncertain input bits does not decrease the output uncertainty 
\ifnum \supp=0
(see Thm~\ref{thm:natural} in Appendix.~\ref{sec:kleenecomb} for a formal characterization of this set of Natural functions; Appendix~\ref{sec:kleenecomb} contains some basic material on representing Boolean functions in the Kleene logic).  
\else
(see the supplementary material for the full circuit-theoretic background).
\fi
The fact that the transition function is Natural limits the computed ternary functions per transition, 
as shown in this section. However, the discussion includes all standard (Boolean) Turing machines with meta-stable inputs as a special case.

Finally, a \TM whose transition function is Natural is implementable by a combinational circuit and (simple) memory registers - we refer to this subclass of machines by \emph{Natural Turing Machines}. 

%
%
%
We now revisit the definition of Turing Machines\ref{def:dettm} and define a subclass of Turing Machines that are implementable by Natural Boolean functions. We refer to such \TM's by \emph{Natural Turing Machines.}
For the description of this subclass of \TM's we take a less abstract approach and describe these machines by their functionality, i.e., how the transition function and output function behave, how states are encoded and how the reading head is managed. 
The input tape might include $\mfu$ symbols that, when read, affect the output of the circuit that implements the transition and output functions. This is precisely one of the points where the more circuit-oriented description of Natural \TM's comes to play: how combinational circuits implement these functions dictates the next state and the output.

\emph{Why Natural \TM's are a subclass of all \TM's?} 
The discussion in the beginning of this section (as formalized in \ifnum\supp=0 Theorem~\ref{thm:natural}\else  the supplementary material\fi) implies that not all ternary functions are implementable by combinational circuits. As an example, let us consider two such (desirable) operations. In both operations, a single bit is given as an input and a single bit is output. In the first function $r:\IT \rightarrow \IT$ the input bit is output as is if it is a stable one, or it is resolved to, say, the value of $1$. A somewhat related function $d:\IT \rightarrow \IT$ detects whether this bit is stable. \ifnum\supp=0\cref{thm:natural} implies that \else In turn, \fi these two operations are not natural since they violate the monotonicity w.r.t.\ $\preceq$ property and are not implementable by combinational circuits, as formalized in the following corollary. 

\begin{corollary}[[Coro.~17,18~\cite{friedrichs18containing}]
    The functions $d,r\colon\IT\rightarrow\IB$, with
    \begin{align*}
        d(x)\coloneqq
        \begin{cases}
            0 & \text{if } x\in\IB \\
            1 & \text{if } x=\mfu
        \end{cases}
        \quad
        r(x)\coloneqq
        \begin{cases}
            x & \text{if } x\in\IB \\
            1 & \text{if } x=\mfu
        \end{cases}
    \end{align*}
    are not monotone, they
    cannot be implemented by a combinational circuit.
\end{corollary}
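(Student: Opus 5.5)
The plan is to show directly that $d$ and $r$ are not monotone with respect to the information order $\preceq$ on $\IT$, where $0,1 \preceq \mfu$, i.e.\ $a\preceq b$ iff $\res(a)\subseteq\res(b)$. Then we invoke the characterization of combinational (Kleene) circuits: every function computed by a circuit built from $\AND_\mfu$, $\OR_\mfu$ and $\NOT_\mfu$ is monotone w.r.t.\ $\preceq$. This is the content of Theorem~\ref{thm:natural}, and equivalently of the cited Coro.~17,18 of~\cite{friedrichs18containing}.

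For non-monotonicity we exhibit one pair in each case. For $d$, we have $0\preceq\mfu$, but $d(0)=0$ and $d(\mfu)=1$. Monotonicity would require $0\preceq 1$, which fails because $0$ and $1$ are incomparable stable values. For $r$, we have $0\preceq\mfu$, but $r(0)=0$ and $r(\mfu)=1$, and again $0\not\preceq 1$.

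For the circuit part, in case one wants it self-contained rather than citing Theorem~\ref{thm:natural}, the argument goes by induction on the circuit structure. First, each basic gate in \cref{tab:gates} is monotone w.r.t.\ $\preceq$. This follows because each gate is by definition the metastable closure $g_\mfu(x)=\bigstarr g(\res(x))$, and $x\preceq x'$ gives $\res(x)\subseteq\res(x')$, so the superposition over the larger set is $\succeq$ the one over the smaller set. Second, the composition of monotone maps is monotone, as is duplication of wires (fan-out). Hence every combinational circuit computes a $\preceq$-monotone function, and since $d$ and $r$ are not monotone, neither can be implemented.

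The only point requiring care is the gate-wise monotonicity of the superposition, namely showing that $A\subseteq A'$ implies $\bigstarr A \preceq \bigstarr A'$ coordinatewise. This holds because if all elements of $A'$ agree at a coordinate, then so do all elements of $A$, and they agree on the same value. I expect no real obstacle beyond this.
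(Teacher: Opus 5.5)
Your proposal is correct and follows the paper's route: you exhibit a violating pair ($\mfu$ versus $0$) to show that $d$ and $r$ are not monotone, and then use that every Kleene circuit computes a $\preceq$-monotone function, which is the necessary direction of Theorem~\ref{thm:natural}; your gate-by-gate induction makes that step self-contained. One notational point: the paper orders $\IT$ with $\mfu\preceq 0$ and $\mfu\preceq 1$ (so $x\preceq y$ iff $\res(y)\subseteq\res(x)$), which is the dual of your convention. This is harmless, since a map is monotone for an order iff it is monotone for its dual, but in the paper's terms the witnesses read $\mfu\preceq 0$ while $d(\mfu)=1\not\preceq 0=d(0)$ and $r(\mfu)=1\not\preceq 0=r(0)$.
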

Hence, not all transition and output functions are possible when discussing input and output alphabets of $\IT$.
Last detail of the description of Natural Machines is how the movement of the reading head is defined. Since combinational circuits implement the output function, they might output $\mfu$s  to be written on the machine's tape and output a $\mfu$ to the direction to which the reading head should move next. We hence consider \emph{Oblivious} machines here. This eliminates this source of ambiguity of the direction of the moving head.

In the next section we show a \emph{Universal Natural \TM} implementation that computes the \MC\ of a given encoding of a \TM\ and its input. 

We are now ready to describe the class of Natural \TM's. In this definition, and from an abstract point of view, all the symbols are encoded in binary (this requires three bits to encode $\{0,1,\blank,\mfu,\leftt,\rightt\}$). As mentioned in Section~\ref{sec:inputoutputalpha} such encodings do not restrict computability. 

\begin{definition}[A Natural Turing Machine]\label{def:nattm}
  A \emph{Natural Turing Machine} is a $7$-tuple, $(Q, \Sigma, \Gamma,\delta, q_0, \blank, F)$,
    where $Q$, $\Sigma$, $\Gamma$ are all finite sets and
  \begin{itemize}
      \item $Q = \{0,\mfu,1\}^q$ is the set of states,
      \item $\Gamma=\{0,\mfu,1,\blank\}$ is the tape alphabet,
      \item $\blank$ is the blank symbol,
      \item $\Sigma=\{0,\mfu,1\}$ is the input alphabet, 
      \item $\delta\colon Q\times\Gamma\rightarrow Q\times\Gamma\times\{\leftt,\rightt\}$ is the transition function,
      \item $q_0\in \{0,1\}^q$ is the initial state,
      \item $F\subseteq \{0,1\}^q$ is the set of final states.~\footnote{This definition of final states does not limit the generality of the model. One can connect all the final states to a ``super'' final state which is stable. In turn, if there is an execution that ends up in the closure of final states, then in the next step all of converge to the super state. } 
  \end{itemize}
  Moreover, Natural Turing Machines are Oblivious and the Boolean transition function $\delta$ is a Natural Function.  
\end{definition}

The behaviour of Natural \TM's is exactly as described in Section~\ref{sec:tm}. Again, the restrictions posed here are that the transition function is a Natural Function (e.g., implemented by a combinational circuit), and the initial and final states must be ``stable''.  

From a hardware point of view (foregoing the technicality of the blank symbols) the combinational circuit $C_\delta$ implements the $\delta$ function restricted to Binary inputs. 
\ifnum\supp=0
Theorem~\ref{thm:natural} implies that the way this circuit is designed yields its behavior facing $\mfu$'s. 
\fi
The initialization to the registers holding the states of the Natural \TM\ are initialized to a string in $\{0,1\}^q$ that encodes the initial state $q_0$. Any $\mfu$ input  that enters $C_\delta$ might affect its outputs which are either output to the tape or saved in the registers holding the state, which in this case captures the ambiguity imposed by the input $\mfu$ - this is the reason that the state space in Definition~\ref{def:nattm} is $\{0,\mfu,1\}^q$.~\footnote{Here we use \emph{simple registers} as defined by Friedrichs et al.~\cite{friedrichs18containing} which simply capture the signals in their inputs output it as is. }
%


%

\section{Implementation of a Natural and Universal M-Containing Turing machine} 
\label{sec:simulation}
In this section, we present a Natural and Universal M-Containing Turing machine.
That is, given a \TM $M$ and input $x\in\IT^*$, the Natural \TM simulates $M_\mfu(x)$.
The Natural Universal M-Containing \TM is denoted by $\UMCTM$. 
Intuitively, the construction works as follows:
\begin{itemize}
    \item When the length $n$ of the input is known, enumerate all
        possible stable inputs, i.e., the set $\IB^n$.
    \item Simulate $M$ on every input $i\in\IB^n$ and take notes
        of the output on the tape.
    \item Simulate a $\CMUX$ that selects the tape entry corresponding to input $x$.
\end{itemize}
The $\CMUX$ correctly selects the right entry if $x$ is stable,
        otherwise, it returns the \MC of all entries
        corresponding to resolutions of $x$.
Before presenting the construction, we show how to 
    simulate the $\CMUX$ in the following section.

The construction draws inspiration from the exponential circuit construction
    in~\cite{IkenmeyerKLLMS19}, which computes the \MC for arbitrary 
    inputs by controlling where $\mfu$ enters the computation.
Similarly to the exponential circuit construction, $\UMCTM$ has an exponential 
    overhead in the running time. 
By the lower bound in \cref{sec:exptime}, we can conclude that the exponential 
    overhead is necessary and the construction is asymptotically optimal.

\subsection{Simulating an M-Containing Multiplexer \CMUX}\label{sec:simcmux}

Here we describe how to use a \TM to simulate an M-Containing Multiplexer \CMUX as 
    given in 
    \ifnum\supp=0
    \cref{def:cmux}.
    \else
    the following definition.
    \begin{definition}[M-Containing Multiplexer (\CMUX)]\label{def:cmux}
    Given $\ell\in\IN_{>0}$, such that $n=2^\ell$;
        an $n$ to $1$ \emph{multiplexer} is given by the Boolean function  
        $\NMUX\colon\IB^n\times\IB^\ell\rightarrow\IB$. 
    Let $x\in\IB^n$ be a string of input bits and $s\in\IB^\ell$ be the select input, then
    \begin{align*}
        \NMUX(x,s)\coloneqq x_p \text{, and } p \coloneqq {\langle s\rangle_B}\,,
    \end{align*}
    where $\langle s \rangle_B \in \IN$ denotes the number encoded (in binary encoding) by $s$,
    and $x_p$ denotes the bit at position $p$ in $x$.
    
    The $n$ to $1$ \emph{M-Containing Multiplexer ($\CMUX$)} is given by the function 
        $\NCMUX\colon\IT^n\times\IT^\ell\rightarrow\IT$.
    It is defined by the M-Containing of the $n$ to $1$ multiplexer;
    \begin{align*}
        \NCMUX(x,s) \coloneqq \left(\NMUX(x,s)\right)_\mfu\,,
    \end{align*}
    where $x\in\IT^n$ and $s\in\IT^\ell$.
\end{definition}
    More about M-Containing Multiplexers can be found in the  supplementary material.
    \fi

Inputs to the \TM are then the string $x\in\IB^n$, and select input 
    $s\in\IB^\ell$, under the assumption that
    $n=2^\ell$ is initially known to the \TM.
%
%
The initial configuration is given as follows; 
\begin{itemize}
    \item $s$ is given in binary encoding on the input tape,
    \item $x$ is placed on the internal tape,
    \item $n,\ell$ are known initially to the \TM,
    \item $x,s$ are followed by $\blank$'s. 
\end{itemize}
A \CMUX on single bits $a,b,s\in\IT$ can be implemented with a few gates in natural logic, as follows;~\cite{FriedrichsK17}
\begin{align}
    &\CMUX(a,b,s)\coloneqq\notag\\
    \OR_\mfu(&
        \AND_\mfu(a,\NOT_\mfu(s)), 
        \AND_\mfu(b,s),
        \AND_\mfu(a,b))\,.\label{eqn:cmux}
\end{align}

We denote \TM that can simulate an $\NCMUX$, by $T_{\CMUX}$.
Intuitively, the machine $T_{\CMUX}$ 
\begin{itemize}
    \item splits the result of the previous step into two halves 
    \item processes the two halves,
    \item feeds them bit-by-bit into a \CMUX,
    \item and overwrites the first positions of the tape.
\end{itemize}
The final result is, hence, stored in the first bit of the tape.
Formally, $T_{\CMUX}$ is given as follows;

\begin{algorithm}
    \caption{\textbf{Turing Machine} $T_{\text{CMUX}}$}
    \begin{algorithmic}[1]
        \State \textbf{Input:} $x \in \{0,1\}^n$, $s \in \{0,1\}^\ell$
        \For{$i = 0$ to $\ell - 1$}
            \For{$j = 0$ to $2^{\ell - i - 1} - 1$}
                \State Read $a \gets x[j]$
                \State Read $b \gets x[j + 2^{\ell - i - 1}]$
                \State Write $x[j] \gets \text{CMUX}(a, b, s[i])$
            \EndFor
        \EndFor
        \State \textbf{Return} $x[0]$
    \end{algorithmic}
\end{algorithm}

\ifnum\arxiv=0
\begin{figure}
    \centering    \includegraphics[width=0.6\linewidth]{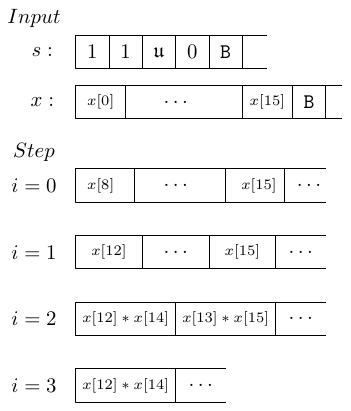}
    \caption{Example execution of $T_\CMUX$, where $n=4$, $s=11\mfu0$ and $x=x_0\ldots x_{15}$.
    Each step shows the tape content after executing the For-loop in line \textbf{1.} for one round.}
    \label{fig:expsimu_ex}\label{fig:cmux} 
\end{figure}
\else
\begin{figure}
    \centering    \includegraphics[width=0.4\linewidth]{figures/expsimu_ex.pdf}
    \caption{Example execution of $T_\CMUX$, where $n=4$, $s=11\mfu0$ and $x=x_0\ldots x_{15}$.
    Each step shows the tape content after executing the For-loop in line \textbf{1.} for one round.}
    \label{fig:expsimu_ex}\label{fig:cmux} 
\end{figure}

\paragraph{Example\ifnum\arxiv=1 .\fi}
In \cref{fig:cmux} an example for $\ell=4$ and $s=11\mfu0$ is given.
If we replace $\mfu$ by $0$ and $1$, we observe that the resolutions of $s$ correspond to the binary encodings of $12$ and $14$, i.e., $1100=\langle 12 \rangle_B$ and $1110=\langle 14 \rangle_B$.
Execution of steps $i=0$ (resp.\ $i=1$) copies $x[8]$ to $x[15]$ (resp.\ $x[12]$ to $x[15]$) to the front of the tape, because $s[0]=1$ (resp.\ $s[1]=1$).
Due to the implementation of the single bit \CMUX (Equation~\ref{eqn:cmux}), step $i=2$
computes the superpositions $x[12]*x[14]$ and $x[13]*x[15]$.
Since $s[3]=0$ is stable, the \CMUX correctly selects the tape position with $x[12]*x[14]$.  

\paragraph{The \TM simulating an $\NCMUX$ is Oblivious and Natural\ifnum\arxiv=1 .\fi}
The position of the tape head at each computational step depends only on $\ell$ and the number of the computational step.
Hence, $T_{\CMUX}$ is Oblivious according to \cref{def:obliviousTM}.
Moreover, $T_\CMUX$ is a Natural \TM, every write operation is the result of reading the content of the tape and processing it with gates in Kleene logic (see \cref{eqn:cmux}). 

\begin{lemma}\label{lem:cmuxnatural}
    The $T_\CMUX$ computes the $\NCMUX$ function on inputs $x$ and $s$ as given in \cref{def:cmux}, it is an Oblivious and Natural \TM.
    The $T_\CMUX$ has running time $\BO(n)$.
\end{lemma}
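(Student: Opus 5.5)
We prove the three claims of \cref{lem:cmuxnatural} separately: correctness, the structural properties (Oblivious and Natural), and the running time. The core of correctness is an induction over the outer loop index $i$. For each $i$ with $0\le i\le \ell$ we maintain the invariant
\begin{align*}
x^{(i)}[j] = \NCMUX\bigl(x[j\cdot 2^{0}],\ldots\bigr) = \bigstarr\{\, x[p] \;:\; p = \langle t\rangle_B\cdot 2^{\ell-i} + j,\ t\in\res(s[0:i-1])\,\}
\end{align*}
for all $0\le j<2^{\ell-i}$. Here $x^{(i)}$ is the tape content after $i$ rounds, and $s[0]$ is the most significant select bit, matching the halving order of the algorithm. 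The base case $i=0$ is trivial. At $i=\ell$ the invariant says $x^{(\ell)}[0]=\bigstarr x(\res(s))=\NCMUX(x,s)$, which is exactly the claim.

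For the inductive step, we first note that in round $i$ the cell $x[j]$ is overwritten with $\CMUX(x^{(i)}[j],x^{(i)}[j+2^{\ell-i-1}],s[i])$. The cell $x[j+2^{\ell-i-1}]$ with $j<2^{\ell-i-1}$ is never written in round $i$, so reads and writes do not interfere. The single-bit identity we need is that \cref{eqn:cmux} equals $\bigstarr\{\MUX(a',b',s')\}$ over $(a',b',s')\in\res(a,b,s)$. It suffices to use this only for stable $s$, where it is plain selection, and for $s=\mfu$, where it gives $a\starr b$; both are checked with the truth tables in \cref{tab:gates}. With $s=\mfu$ and $a=b$ stable, the term $\AND_\mfu(a,b)$ yields $a$; in all other cases the result is $\mfu$, which equals $a\starr b$. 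Combining this with the associativity of $\starr$ splits the index set for round $i+1$ according to the two resolutions of $s[i]$, and the invariant follows.

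The main obstacle is a subtlety: the inner values $a,b$ are themselves superpositions rather than raw inputs. We need $\CMUX(a,b,s)=\MUX_\mfu$ applied to ternary $a,b$, and this holds because the identity above is verified for all ternary $a,b$, not just stable ones. We therefore check the full $3^3$ table, or argue by cases on $s$.

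The Oblivious and Natural properties, together with the running time, are routine. The head schedule depends only on $i$, $j$ and $\ell$, so the machine is Oblivious by \cref{def:obliviousTM}. Every written symbol is produced by Kleene gates (\cref{eqn:cmux}), so the machine is Natural. For the running time, round $i$ performs $2^{\ell-i-1}$ updates. Each update moves the head across a distance of at most $2^{\ell-i}$ plus an access to $s$. If $s$ is kept near the head, for instance by caching the current bit $s[i]$ in the state, the cost of round $i$ is dominated by these movements. A naive sweep costs $2^{\ell-i-1}\cdot 2^{\ell-i}$, which is quadratic. To avoid this, we perform the updates with a constant number of back-and-forth sweeps of length $\BO(2^{\ell-i})$, carrying a block of the second half toward the first. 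The geometric sum $\sum_i 2^{\ell-i}=\BO(n)$ then gives the bound. If a linear-time streaming scheme is hard to obtain for a single tape, we will fall back on the paper's implicit model of an internal tape with constant-time access.
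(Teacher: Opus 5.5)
Your correctness argument is sound and supplies what the paper leaves implicit. The paper justifies obliviousness and naturalness exactly as you do: the head position depends only on $\ell$ and the step number, and every written symbol is produced by the Kleene gates of \cref{eqn:cmux}. Beyond that, it only illustrates correctness on the example of \cref{fig:cmux}. Your argument closes this: the invariant with $s[0]$ as the most significant bit, non-interference of reads and writes within a round, and the identities $\CMUX(a,b,0)=a$, $\CMUX(a,b,1)=b$, $\CMUX(a,b,\mfu)=a\starr b$ for all ternary $a,b$ give $x^{(\ell)}[0]=\NCMUX(x,s)$. Drop the meaningless middle term $\NCMUX(x[j\cdot 2^{0}],\ldots)$ from the invariant. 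Note also that the stable cases rely on the absorption law $\OR_\mfu(a,\AND_\mfu(a,b))=a$, which does hold in Kleene logic.

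The genuine gap is the running time. Your step ``a constant number of back-and-forth sweeps of length $\BO(2^{\ell-i})$, carrying a block of the second half toward the first'' cannot be realized on a single tape. The finite control transports only $\BO(1)$ symbols per pass. Round $i$ with $s[i]=1$ is a literal copy of $m=2^{\ell-i-1}$ arbitrary symbols over distance $m$, and by the standard crossing-sequence argument that needs $\Omega(m^2)$ steps. Since the machine is oblivious, this cost is paid on every input, so round $0$ alone costs $\Omega(n^2)$. Your fallback, an internal tape ``with constant-time access'', is not a Turing-machine model.

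The repair is to use the multi-tape setting the paper already assumes for $T_\CMUX$ ($s$ on the input tape, $x$ on an internal tape), plus one scratch tape. In round $i$, copy $x[m:2m-1]$ to the scratch tape in one pass and rewind. Then make one lockstep pass writing $\CMUX(x[j],\text{scratch}[j],s[i])$ into $x[j]$, holding $s[i]$ in the state while the input head advances one cell per round. This costs $\BO(2^{\ell-i})$ per round, and the geometric sum gives $\BO(n)$. All head movements depend only on $\ell$ and $i$, so obliviousness survives; copying is the identity, so naturalness survives.

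A strictly single-tape variant that combines adjacent live cells, consuming the least significant select bit first, runs in $\BO(n\log n)$. That is enough for the $\tilde{\BO}$ bound of \cref{thm:mcuni}, but not for the stated $\BO(n)$. The paper itself gives no argument for $\BO(n)$, so this repair is needed in any case.
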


\subsection{Universal M-Containing Turing Machine}
We are now in the position to describe $\UMCTM$.
Let $M$ be a \TM with running time $t(n)$ and $x\in\IT^n$ be the input on which $M$ is simulated.
Informally, $\UMCTM$ 
\begin{itemize}
    \item enumerates all possible resolutions of $x$
    \item simulates $M$ on each of the stable resolutions
    \item use subroutine $T_\CMUX$, with input $x$ as the 
    selection string, to select between all stable inputs.
\end{itemize}
Formally $\UMCTM$ is given in the following algorithm;
\begin{algorithm}
    \caption{\textbf{Universal M-Containing Turing Machine} $\UMCTM$}
    \begin{algorithmic}[1]
        \State \textbf{Input:} $M$ is a Turing Machine, $x \in \IT^*$
        \State $n \gets |x|$
        \ForAll{$i \in \{0,1\}^n$}
            \State $y[\langle i \rangle_B] \gets M(i)$
        \EndFor
        \State Simulate $c \gets T_{\text{CMUX}}(y[0:2^n - 1], x)$
        \State \textbf{Return} $c$
    \end{algorithmic}
\end{algorithm}


Initially, $\UMCTM$ starts by counting the length of the input $x$ (see line {\bf 1.}).
Because the $\blank$ symbol is not affected by metastability, the \TM can traverse 
    the input tape until reading the first $\blank$ and count the number of steps.
No increment is affected by metastable inputs.


Once the size of the input is known, the \TM $M$ can be simulated on every
    stable input $i\in\IB^n$.
The output of $M(i)$ is stored on the tape at position $\langle i\rangle_B$ (see line {\bf 2.}).~\footnote{
The choice of the encoding does not matter as long as the encoding is bijective and the $\CMUX$ uses the same encoding.
We choose the binary encoding, because it allows for optimal construction of the \CMUX.}
We use an oblivious \TM to simulate $M$, as any \TM $M$ can be simulated by an oblivious \TM in running time $\BO(t(n)\log t(n))$~\cite{AB09}.

Last, line {\bf 3.} uses the $T_\CMUX$ subroutine to select the run of $M$ corresponding to input $x$.
Since $T_\CMUX$ is M-Containing, the subroutine correctly chooses the superposition over all runs that 
    correspond to resolutions of $x$.

\paragraph{Correctness and running time of $\UMCTM$\ifnum\arxiv=1 .\fi}
The correctness of $\UMCTM$ follows from the construction of $\UMCTM$ and the correctness of $T_\CMUX$.

\mcuni*
\begin{proof}
    Due to correctness of $T_\CMUX$ (c.f.\ \cref{lem:cmuxnatural}), 
        line \textbf{3.} computes
    \begin{align*}
        c &=\CMUX(y[0:2^n-1], x)\\
        &=\bigstarr \{\MUX(y'[0:2^n-1], x')\,|\, x' \in \res(x), y' \in \res(y) \}\\
        &=\bigstarr \{y'[\langle x' \rangle_B]\,|\,x' \in \res(x), y' \in \res(y) \}\\
        &=\bigstarr \{y[\langle x' \rangle_B]\,|\,x' \in \res(x)\}
    \end{align*}
    The last equation is given by the fact that $y$ is generated from stable bits only, such that $y\in\IB$, and $\res(y)=\{y\}$.
    By the \textbf{For}-loop in line \textbf{2.} we obtain
    \begin{align*}
        y[\langle x' \rangle_B] = M(x')\,.
    \end{align*}
    Hence, the $\UMCTM$ on inputs $M$ and $x$ computes
    \begin{align*}
        c = \bigstarr \{M(x')|x'\in\res(x)\} = M_\mfu(x)\,.
    \end{align*}
    %
    We regard the running times of lines \textbf{1.}, \textbf{2.}, and \textbf{3.} separately.
    Line \textbf{2.} dominates the running time, due to exponentially many 
        simulations of the \TM $M$, lines \textbf{1.} and \textbf{3.} 
        have only linear running time. 
    Any \TM $M$ of running time $t(n)$ can be simulated in time 
        $\BO(t(n)\log t(n))$ (see \cite{AB09}, Theorem 1.13).
    Simulation of $M$ is invoked $2^n$ many times.
    Thus, hiding the $\log$ factor, we can bound the running time of $\UMCTM$ 
        by $\tilde{\BO}(t(n)\cdot{2^n})$.
\end{proof}

As stated earlier, the \TM $\UMCTM$ is a Natural \TM, hence, it can be implemented with 
natural gates using Kleene logic. The claim follows from the construction of $\UMCTM$ and
Theorem~\ref{lem:cmuxnatural}.

\begin{lemma}
    The M-Containing \TM $\UMCTM$ is a \emph{Natural} and \emph{Oblivious} \TM.
\end{lemma}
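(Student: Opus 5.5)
The plan is to check the two properties, obliviousness and naturality, separately for each of the three phases of $\UMCTM$, and then argue that both properties survive sequential composition. The three phases are: computing $n=|x|$, the loop over all $i\in\IB^n$ that simulates $M(i)$, and the call to $T_\CMUX$. For composition, the key observation is that the phase boundaries and every head trajectory are determined by $n$ and by $\langle M\rangle$ alone. Since $\langle M\rangle$ is stable, this makes the concatenated head schedule a function of the input length only, as \cref{def:obliviousTM} requires. I will treat the encoding of $M$ as a fixed, stable part of the input, so varying it does not count as input-dependent motion.

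\textbf{Phase 1.} The machine scans right until the first $\blank$ and maintains a binary counter on a work area. The head path depends only on where the first $\blank$ sits, which is exactly $|x|$. Naturality needs an argument because the symbols read may be $\mfu$. In the Boolean encoding of $\Gamma$, the predicate ``is $\blank$'' is computed from encoding bits that stay stable even when a cell holds $\mfu$: the bit separating $\blank$ from $\{0,1,\mfu\}$ is itself stable. As the paper notes, the counter updates never read $x$'s values, so every gate in this phase sees only stable inputs.

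\textbf{Phase 2.} This phase enumerates $i$ with a stable binary counter. For each $i$ it writes $i$ to a scratch region and runs an oblivious simulation of $M$ on $i$, which exists by \cite[Ex.~1.6]{AB09}. It then copies the one-bit output to cell $\langle i\rangle_B$ of the $y$ region. Every quantity involved, namely $i$, $M(i)$ and the head moves, is Boolean and depends only on $n$ and $M$. The transition logic can therefore be any Boolean circuit: it is automatically natural because it never receives a $\mfu$, and on Boolean inputs Kleene gates agree with Boolean gates. To keep the running time fixed, I pad each simulation to exactly $\BO(t(n)\log t(n))$ steps, so that the head schedule does not depend on when $M$ actually halts.

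\textbf{Phase 3.} Here \cref{lem:cmuxnatural} applies directly: $T_\CMUX$ is oblivious, its head schedule depends only on $\ell=n$, and it is natural. This is the only phase where $\mfu$ symbols from $x$ reach the transition circuit, and they enter solely through the gate network of \eqref{eqn:cmux}.

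I expect the main obstacle to be showing that a single combinational transition circuit $C_\delta$ implements all three phases naturally, rather than merely each phase in isolation. If $\mfu$ leaked into the phase-control state, the state registers could become metastable and break the head schedule. My first attempt is to keep the phase and the loop counters in dedicated state bits that are driven only by stable signals. I then argue structurally that every $\mfu$-carrying wire flows only into the tape-output bits of the \CMUX subcircuit and never into the move direction or the control state. Naturality of the composite then follows from closure of natural functions under composition, using Theorem~\ref{thm:natural}.
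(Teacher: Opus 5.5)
Your proposal is correct and follows the same route as the paper. Both check the phases one by one: the length count and the $2^n$ simulations of $M$ only ever act on stable values, and the final phase is natural and oblivious by \cref{lem:cmuxnatural}. Your additions make explicit what the paper leaves implicit: treating $\langle M\rangle$ as fixed, padding each simulation, and keeping stable phase/counter bits apart from $\mfu$-carrying data so that head moves and halting stay stable. One correction: on a single tape, $T_\CMUX$ must buffer values such as $a$ and $s[i]$ in its finite control, so $\mfu$ does enter data-holding state bits. The separation you need is between control bits and data bits of the state, not between the state and the tape.
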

\begin{proof}
    We argue for each line separately that it can be implemented by a Natural function.
    Line \textbf{1.} is fully stable, and no $\mfu$ enters the system, since it traverses the input
    (without reading it) until the first $\blank$ symbol.
    The simulation of $M$ in line \textbf{2.} is performed on values in $\IB$ only.
    After execution of the \textbf{For}-loop in line \textbf{2.} the tape contains only values in $\IB$. 
    Hence, the implementation of lines \textbf{1.} and \textbf{2.} can be done with natural gates.
    
    In line \textbf{3.} the output is computed by a subroutine simulating a \CMUX on input $x$ and the tape content.
    The claim follows from \cref{lem:cmuxnatural}, which states that the \CMUX subroutine is natural.

    Similarly, we argue for each line that the computation is Oblivious.
    Since line \textbf{1.} traverses the input once from left to right, it is oblivious. 
    The number of simulations in line \textbf{2.} depends only on the length of the input. 
    Since we are using an Oblivious \TM to simulate $M$, hence line \textbf{2.} is oblivious, too.
    From Lemma~\ref{lem:cmuxnatural} it follows that the \CMUX subroutine is oblivious.
    This concludes the claim.
\end{proof}

\begin{remark}
    \begin{itemize}
        \item The $\UMCTM$ does not necessarily have to be fully Oblivious. We can use a non-Oblivious \TM dor the  simulation in Line \textbf{2.}, which might save simulation time if $M$ is non-Oblivious. However, asymptotically the running time of the $\UMCTM$ is identical for Oblivious and non-Oblivious computation.
        \item If \TM $M$ has time complexity in \exptime or harder, then simulating $M$ with $\UMCTM$ remains in the complexity class \exptime.
        Asymptotically, the construction does not add any blowup.
    \end{itemize}
\end{remark}

\begin{corollary}\label{coro:unioblopt}
    When using an Oblivious simulation in line \textbf{3.} of $\UMCTM$, 
    then $\UMCTM$ is Oblivious.
\end{corollary}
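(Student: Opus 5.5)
The claim is that $\UMCTM$ is Oblivious once the simulation it performs on stable inputs is carried out by an Oblivious machine. The corollary literally says ``line \textbf{3.}'', but I read this as the simulation step, i.e.\ the \textbf{For}-loop that runs $M$ on every $i\in\IB^n$ (line \textbf{2.} in the proof of the preceding lemma). My plan is to show that the head position at every step of $\UMCTM$ is a function only of $n=|x|$, the description of $M$ with its time bound $t(n)$, and the step index. I will go phase by phase and then concatenate the phases. The concatenation is harmless because each phase has a length that depends only on $n$ and $t(n)$.

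\textbf{Counting phase.} The machine first computes $n$. It walks right from the leftmost input cell until it reads the first $\blank$, which takes exactly $n+1$ steps, and then returns to a fixed reference position. The trajectory is a function of $n$ alone. This is because $\blank$ is never produced by resolving a $\mfu$, so the stopping point cannot depend on the input values.

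\textbf{Simulation phase.} This is where the hypothesis is used. For each $i\in\IB^n$ the machine runs $M$ through an Oblivious simulator, which exists with overhead $\BO(t(n)\log t(n))$ by~\cite[Ex.~1.6]{AB09}. The key requirement is that every run takes \emph{exactly} the same number of steps, even when $M$ halts early on some particular $i$. I will enforce this by padding each run to the fixed length $t'(n)=\BO(t(n)\log t(n))$ determined by the Oblivious simulator.

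The surrounding bookkeeping must also be oblivious. Advancing the counter $i$ in lexicographic order is done with a binary incrementer that always sweeps all $n$ bits of the counter. Writing the result into cell $y[\langle i\rangle_B]$ is done by a fixed shuttle sweep over the whole $y$ region, writing only where a comparison with $i$ succeeds. Both the incrementer and the shuttle then have head movements that are independent of the value of $i$. Hence the full trajectory of this phase is the same across all $2^n$ iterations and depends only on $n$ and $t(n)$.

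\textbf{Selection phase and the main obstacle.} The final phase is the call to $T_\CMUX$. By \cref{lem:cmuxnatural} it is Oblivious: its head moves depend only on $\ell=n$ and the step count, since the loops over $i$ and $j$ have fixed bounds.

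The only genuinely delicate point is the one already raised in the simulation phase: the number of steps per iteration must be input-independent. Without that, the start time of later phases would shift with $x$, and obliviousness would break. The padding to $t'(n)$, together with the fixed-sweep bookkeeping, is exactly what removes this dependence, so that the total step count and the head position at every step are functions of $n$ and the step index only.
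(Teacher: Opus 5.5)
Your proposal is correct and follows the same route as the paper. The paper gives no separate proof of this corollary; the argument is in the proof of the lemma just before it, which goes phase by phase: computing $|x|$ is a single left-to-right traversal; the simulation phase is oblivious because the number of runs depends only on $n$ and each run uses an Oblivious simulator; and the $T_{\CMUX}$ selection is oblivious by \cref{lem:cmuxnatural}. Your padding of every run to a fixed length and your fixed-sweep bookkeeping simply make explicit the uniform per-iteration timing that the paper takes for granted; like the paper, you implicitly treat $M$ and its time bound as fixed rather than as part of the input, whose length alone would have to determine the head position.
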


\medskip
\textbf{Generality of the Universal M-Containing Turing Machine.} 
At first glance, the $\UMCTM$ seems to be restricted to inputs $\langle M,x\rangle$, 
where $M(x)$ always terminates.
However, we can make use of the \UTM with a Time Bound (c.f.\ Definition~\ref{def:unitm-bounded}) to compute any \TM for a given number of steps.
Given an arbitrary \TM $M$, an input $x$, and a time bound $T$, the \UTM simulates $M(x)$ for $T$ time steps. 
%
We can make use of the \UTM with a Time Bound to compute the \MC of $M$ as follows;
\begin{align*}
    \UMCTM(\UTM,\langle M,T,x\rangle)\,,
\end{align*}
the \TM given to $\UMCTM$ is \UTM, and the argument is $\langle M,x,T \rangle$.
Such that $\UMCTM$ (1) lists all possible inputs of length $|\langle M,x,T \rangle|$, (2) executes $\UTM$ on each possible input, and (3) selects the output with the \CMUX subroutine. 
Because $M$ and $T$ do not contain a $\mfu$, the \CMUX will select the correct output corresponding to the \MC of $M$ on input $x$ after $T$ time steps.

Note that there might be a resolution $y\in\res(x)$, such that $M(y)$ does not terminate within $T$ steps.
In this case, $\UTM(M,y,T)$ will output a special symbol which represents the case that $M(y)$ does not terminate within the time bound.
We extend the superposition to the special symbol in the intuitive sense, i.e., the superposition of a set containing only the special symbol returns the special symbol, otherwise, the superposition of a set containing the special symbol and other ternary symbols returns a $\mfu$.
Hence, if for all resolutions $y\in\res(x)$, $M(y)$ does not terminate, then the $\UMCTM$ returns the special symbol.
Otherwise, if there exists one input resolution that terminates in $T$ steps and there exists one resolution that does not terminate, then the $\UMCTM$ returns $\mfu$, i.e., the superposition of the output of $M(y_1)$ and the special symbol.

We make the statement more general by allowing uncertainty in the encoding of $M$ and $T$ as follows;
The machine simulated by the $\UMCTM$ is the $\UTM$ and the input is $\langle M,x,T\rangle\in\IT^*$.
Before simulating it, the $\UTM$ needs to verify that $M$, for some resolution of $\langle M,x,T\rangle$, is a representation of a \TM.
Similar to~\cite{AB09}, we propose to first parse the encoding of $M$.
Invalid strings are mapped to a default \TM, which immediately terminates and outputs $0$.
The $\UTM$ first needs to read the set of states, the input alphabet, and the output alphabet. Then, the the $\UTM$ needs to ensure that the transition function is defined only on valid states and symbols.
The parsing time is polynomial in the length of the encoding of $M$.

An efficient implementation of a \UTM with a Time Bound can simulate its input in $\BO(T \log T)$ steps (see text following Definition~\ref{def:unitm-bounded}).
Let $n_M = |\langle M \rangle|$ be the length of the encoding of $M$, then simulation with a \UTM including an input parser requires running time $\BO(n_M + T\log T)$.
Assuming that the Bound on the running time is much larger than the size of the encoding, this term is dominated by $T\log T$.
Plugging everything into Theorem~\ref{thm:mcuni}, we obtain a running time of $\tilde{\BO}(T\cdot 2^{n'})$, where $n'=|\langle M,x,T \rangle|$.
The result in Section~\ref{sec:exptime} suggests that this running time is optimal.

\begin{corollary}\label{cor:generality}
    Let \TM $M$ be defined on $\IB^*$, input $x\in\IT^*$, and time bound $T\in\IN\setminus\{0\}$.
    Given a possibly uncertain encoding $\langle M,x,T\rangle\in\IT^*$, 
    we can compute the \MC over all possible resolutions of $M$, $x$, and $T$ by
    \begin{align*}
        \UMCTM(\UTM,\langle M,x,T\rangle)\,.
    \end{align*}
    in $\tilde{\BO}(T\cdot 2^{n})$ running time, where $n$ is the length of the encoding of $M$, $x$, and $T$.
\end{corollary}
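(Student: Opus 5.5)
The statement to prove is Corollary~\ref{cor:generality}. I would prove it by instantiating Theorem~\ref{thm:mcuni} with the simulated machine taken to be a fixed, total machine $U'$. Here $U'$ is the \UTM with a Time Bound of Definition~\ref{def:unitm-bounded}, preceded by the input parser described above. The parser maps every binary string to a valid triple $\langle M',x',T'\rangle$; any string that does not encode a valid triple is sent to the default \TM that halts immediately with output $0$.

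\textbf{Step 1: $U'$ is total.} On every stable string $w\in\IB^{n}$, $U'$ halts. It outputs the decision of $M'$ on $x'$ if $M'$ halts within $T'$ steps, and otherwise the special failure symbol. Its running time is $\BO(n_{M'}+T'\log T')$. I extend the superposition to the failure symbol in the way described above. Then $U'$ defines a total function on $\IB^n$, so its closure $(U')_\mfu$ is well defined. By Definition~\ref{def:closure}, this closure is exactly the superposition of the simulated outcomes over all resolutions $w\in\res(\langle M,x,T\rangle)$. These resolutions range jointly over the resolutions of the encodings of $M$, $x$ and $T$, which is what the corollary asks for.

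\textbf{Step 2: apply Theorem~\ref{thm:mcuni}.} With $U'$ as the machine, Theorem~\ref{thm:mcuni} gives that $\UMCTM(U',\langle M,x,T\rangle)$ outputs $(U')_\mfu(\langle M,x,T\rangle)$. For this I reuse the proof of Theorem~\ref{thm:mcuni} verbatim. The only point to check is that the \CMUX{} identity
\[
  c=\bigstarr\{y[\langle w\rangle_B]\mid w\in\res(\langle M,x,T\rangle)\}
\]
still holds. It does, because every entry $y[\cdot]$ is produced by a stable simulation. For the failure symbol, I encode it by a stable bit pattern, so that the \CMUX{} superposition of encodings reproduces the extended superposition.

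\textbf{Step 3: running time.} The running time is $2^{n}$ times the cost of one run of $U'$, plus the linear-time \CMUX{} and length-counting phases. This gives $\BO\bigl(2^{n}(n_M+T\log T)\bigr)=\tilde\BO(T\cdot 2^{n})$.

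\textbf{Main obstacle.} The hard part is that $T$ itself may be uncertain, so different resolutions carry different time bounds $T'$. A resolution of $T$ can be as large as about $2^{|\langle T\rangle|}$. I would handle this by bounding each run of $U'$ by the largest resolution $T_{\max}$ of the time bound, and I would interpret the $T$ in the stated bound as $T_{\max}$. This does not change the asymptotic $\tilde\BO(T\cdot 2^n)$ form.

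A second check is that the uniform $\Theta(t\log t)$ running-time bound required by Theorem~\ref{thm:mcuni} holds for $U'$. Running every simulation for exactly $T_{\max}$ steps achieves this, and it also keeps $\UMCTM$ oblivious by Corollary~\ref{coro:unioblopt}.
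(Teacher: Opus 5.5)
Your proposal is correct and is essentially the paper's argument: run $\UMCTM$ on the time-bounded \UTM{} preceded by a parser that maps invalid encodings to a default machine, extend the superposition to the failure symbol, and plug the per-run cost $\BO(n_M+T\log T)$ into Theorem~\ref{thm:mcuni}; your explicit cap at $T_{\max}$ repairs a point the paper glosses over, since $\UMCTM$ simulates all $2^n$ strings, and their parsed time fields need not be bounded by $T$. The one claim to drop is the obliviousness remark: $T_{\max}$ depends on the content of the input, and computing it means reading the $\mfu$'s (the non-natural resolution $r$), so the head movement is no longer a function of the input length alone and Corollary~\ref{coro:unioblopt} does not apply---though obliviousness is not part of the statement you are proving.
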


\section*{Acknowledgments}
During the preparation of this manuscript, the authors used AI-based tools exclusively for editorial support, limited to grammar, language polishing, and readability improvements.

%
\ifnum\arxiv=0
\bibliographystyle{IEEEtran}
\else
\bibliographystyle{alpha}
\fi
\bibliography{biblio.bib}

@book{digirig2012,
  title={Digital logic design: a rigorous approach},
  author={Even, Guy and Medina, Moti},
  year={2012},
  publisher={Cambridge University Press}
}

@article{turing1936computable,
  author    = {Turing, Alan M.},
  title     = {On Computable Numbers, with an Application to the Entscheidungsproblem},
  journal   = {Proceedings of the London Mathematical Society},
  volume    = {s2-42},
  number    = {1},
  pages     = {230--265},
  year      = {1936},
  doi       = {10.1112/plms/s2-42.1.230}
}

@article{church1936unsolvable,
  author    = {Church, Alonzo},
  title     = {An Unsolvable Problem of Elementary Number Theory},
  journal   = {American Journal of Mathematics},
  volume    = {58},
  number    = {2},
  pages     = {345--363},
  year      = {1936},
  publisher = {Johns Hopkins University Press},
  doi       = {10.2307/2371045}
}

@article{BN20,
  title={On the complexity of detecting hazards},
  author={Komarath, Balagopal and Saurabh, Nitin},
  journal={Information Processing Letters},
  volume={162},
  pages={105980},
  year={2020},
  publisher={Elsevier}
}

@article{M72,
  title={On the B-ternary logic function-A ternary logic considering ambiguity},
  author={Mukaidono, Masao},
  journal={IEICE Trans. Inf. \& Syst.},
  volume={55},
  number={6},
  pages={355--362},
  year={1972}
}

@book{DK14,
  title={Theory of Computational Complexity},
  author={Du, Ding-Zhu and Ko, Ker-I},
  year={2014},
  publisher={John Wiley \& Sons}
}

@book{AB09,
  title={Computational complexity: a modern approach},
  author={Arora, Sanjeev and Barak, Boaz},
  year={2009},
  publisher={Cambridge University Press}
}

@article{IkenmeyerKLLMS19,
  author    = {Christian Ikenmeyer and
               Balagopal Komarath and
               Christoph Lenzen and
               Vladimir Lysikov and
               Andrey Mokhov and
               Karteek Sreenivasaiah},
  title     = {On the Complexity of Hazard-free Circuits},
  journal   = {J. {ACM}},
  volume    = {66},
  number    = {4},
  pages     = {25:1--25:20},
  year      = {2019},
  url       = {https://doi.org/10.1145/3320123},
  doi       = {10.1145/3320123},
  bibsource = {dblp computer science bibliography, https://dblp.org}
}

@inproceedings{BundLM22,
  author    = {Johannes Bund and
               Christoph Lenzen and
               Moti Medina},
  editor    = {Mark Braverman},
  title     = {Small Hazard-Free Transducers},
  booktitle = {13th Innovations in Theoretical Computer Science Conference, {ITCS}
               2022, January 31 - February 3, 2022, Berkeley, CA, {USA}},
  series    = {LIPIcs},
  volume    = {215},
  pages     = {32:1--32:24},
  publisher = {Schloss Dagstuhl - Leibniz-Zentrum f{\"{u}}r Informatik},
  year      = {2022},
  url       = {https://doi.org/10.4230/LIPIcs.ITCS.2022.32},
  doi       = {10.4230/LIPIcs.ITCS.2022.32},
  bibsource = {dblp computer science bibliography, https://dblp.org}
}

@book{Sipser97,
  author    = {Michael Sipser},
  title     = {Introduction to the theory of computation},
  publisher = {{PWS} Publishing Company},
  year      = {1997},
  isbn      = {978-0-534-94728-6},
  bibsource = {dblp computer science bibliography, https://dblp.org}
}

@book{Kleene52,
  title={Introduction to metamathematics},
  author={Kleene, Stephen Cole},
  year={1952},
  publisher={North Holland}
}

@InProceedings{IkenmeyerKS23,
  author =	{Ikenmeyer, Christian and Komarath, Balagopal and Saurabh, Nitin},
  title =	{{Karchmer-Wigderson Games for Hazard-Free Computation}},
  booktitle =	{14th Innovations in Theoretical Computer Science Conference (ITCS 2023)},
  pages =	{74:1--74:25},
  series =	{Leibniz International Proceedings in Informatics (LIPIcs)},
  ISBN =	{978-3-95977-263-1},
  ISSN =	{1868-8969},
  year =	{2023},
  volume =	{251},
  editor =	{Tauman Kalai, Yael},
  publisher =	{Schloss Dagstuhl -- Leibniz-Zentrum f{\"u}r Informatik},
  address =	{Dagstuhl, Germany},
  URL =		{https://drops.dagstuhl.de/opus/volltexte/2023/17577},
  URN =		{urn:nbn:de:0030-drops-175775},
  doi =		{10.4230/LIPIcs.ITCS.2023.74}
}

@article{Marino81,
  author    = {Leonard R. Marino},
  title     = {General Theory of Metastable Operation},
  journal   = {{IEEE} Trans. Computers},
  volume    = {30},
  number    = {2},
  pages     = {107--115},
  year      = {1981},
  url       = {https://doi.org/10.1109/TC.1981.6312173},
  doi       = {10.1109/TC.1981.6312173},
  bibsource = {dblp computer science bibliography, https://dblp.org}
}

@Article{huffman57design,
  author	= {Huffman, David A.},
  title		= {{The Design and Use of Hazard-Free Switching Networks}},
  journal	= {Journal of the ACM},
  volume	= {4},
  number	= {1},
  year		= {1957},
  pages		= {47--62},
}

@article{unger1995hazards,
  author       = {Stephen H. Unger},
  title        = {Hazards, Critical Races, and Metastability},
  journal      = {{IEEE} Trans. Computers},
  volume       = {44},
  number       = {6},
  pages        = {754--768},
  year         = {1995},
  url          = {https://doi.org/10.1109/12.391185},
  doi          = {10.1109/12.391185},
  bibsource    = {dblp computer science bibliography, https://dblp.org}
}

@article{friedrichs18containing,
  author    = {Stephan Friedrichs and
               Matthias F{\"{u}}gger and
               Christoph Lenzen},
  title     = {{Metastability-Containing Circuits}},
  journal   = {IEEE Transactions on Computers},
  volume    = {67},
  issue		= {8},
  year      = {2018}
}

@article{bund2019optimal,
  author       = {Johannes Bund and
                  Christoph Lenzen and
                  Moti Medina},
  title        = {Optimal Metastability-Containing Sorting via Parallel Prefix Computation},
  journal      = {{IEEE} Trans. Computers},
  volume       = {69},
  number       = {2},
  pages        = {198--211},
  year         = {2020},
  url          = {https://doi.org/10.1109/TC.2019.2939818},
  doi          = {10.1109/TC.2019.2939818},
  bibsource    = {dblp computer science bibliography, https://dblp.org}
}

@book{HopcroftU79,
  author    = {John E. Hopcroft and
               Jeffrey D. Ullman},
  title     = {Introduction to Automata Theory, Languages and Computation},
  publisher = {Addison-Wesley},
  year      = {1979},
  isbn      = {0-201-02988-X},
  bibsource = {dblp computer science bibliography, https://dblp.org}
}

@article{BundLM20,
  author       = {Johannes Bund and
                  Christoph Lenzen and
                  Moti Medina},
  title        = {Optimal Metastability-Containing Sorting via Parallel Prefix Computation},
  journal      = {{IEEE} Trans. Computers},
  volume       = {69},
  number       = {2},
  pages        = {198--211},
  year         = {2020},
  url          = {https://doi.org/10.1109/TC.2019.2939818},
  doi          = {10.1109/TC.2019.2939818},
  bibsource    = {dblp computer science bibliography, https://dblp.org}
}

@article{BundFLM20,
  author       = {Johannes Bund and
                  Matthias F{\"{u}}gger and
                  Christoph Lenzen and
                  Moti Medina},
  title        = {Synchronizer-Free Digital Link Controller},
  journal      = {{IEEE} Trans. Circuits Syst.},
  volume       = {67-I},
  number       = {10},
  pages        = {3562--3573},
  year         = {2020},
  url          = {https://doi.org/10.1109/TCSI.2020.2989552},
  doi          = {10.1109/TCSI.2020.2989552},
  bibsource    = {dblp computer science bibliography, https://dblp.org}
}

@inproceedings{BundFLMR20,
  author       = {Johannes Bund and
                  Matthias F{\"{u}}gger and
                  Christoph Lenzen and
                  Moti Medina and
                  Will Rosenbaum},
  title        = {{PALS:} Plesiochronous and Locally Synchronous Systems},
  booktitle    = {26th {IEEE} International Symposium on Asynchronous Circuits and Systems,
                  {ASYNC} 2020, Salt Lake City, UT, USA, May 17-20, 2020},
  pages        = {36--43},
  publisher    = {{IEEE}},
  year         = {2020},
  url          = {https://doi.org/10.1109/ASYNC49171.2020.00013},
  doi          = {10.1109/ASYNC49171.2020.00013},
  bibsource    = {dblp computer science bibliography, https://dblp.org}
}

@inproceedings{TarawnehFL17,
  author       = {Ghaith Tarawneh and
                  Matthias F{\"{u}}gger and
                  Christoph Lenzen},
  title        = {Metastability Tolerant Computing},
  booktitle    = {23rd {IEEE} International Symposium on Asynchronous Circuits and Systems,
                  {ASYNC} 2017, San Diego, CA, USA, May 21-24, 2017},
  pages        = {25--32},
  publisher    = {{IEEE} Computer Society},
  year         = {2017},
  url          = {https://doi.org/10.1109/ASYNC.2017.9},
  doi          = {10.1109/ASYNC.2017.9},
  bibsource    = {dblp computer science bibliography, https://dblp.org}
}

@Article{goto49relay,
  author	= {M. Goto},
  title		= {Application of Logical Mathematics to the Theory of Relay
		  Networks (in {J}apanese)},
  journal	= {J. Inst. Elec. Eng. of Japan},
  volume	= {64},
  number	= {726},
  year		= {1949},
  pages		= {125--130}
}

@inproceedings{FuggerKLP17,
  author       = {Matthias F{\"{u}}gger and
                  Attila Kinali and
                  Christoph Lenzen and
                  Thomas Polzer},
  title        = {Metastability-Aware Memory-Efficient Time-to-Digital Converters},
  booktitle    = {23rd {IEEE} International Symposium on Asynchronous Circuits and Systems,
                  {ASYNC} 2017, San Diego, CA, USA, May 21-24, 2017},
  pages        = {49--56},
  publisher    = {{IEEE} Computer Society},
  year         = {2017},
  url          = {https://doi.org/10.1109/ASYNC.2017.12},
  doi          = {10.1109/ASYNC.2017.12},
  bibsource    = {dblp computer science bibliography, https://dblp.org}
}

@inproceedings{FriedrichsK17,
  author       = {Stephan Friedrichs and
                  Attila Kinali},
  title        = {Efficient Metastability-Containing Multiplexers},
  booktitle    = {2017 {IEEE} Computer Society Annual Symposium on VLSI, {ISVLSI} 2017,
                  Bochum, Germany, July 3-5, 2017},
  pages        = {332--337},
  publisher    = {{IEEE} Computer Society},
  year         = {2017},
  url          = {https://doi.org/10.1109/ISVLSI.2017.65},
  doi          = {10.1109/ISVLSI.2017.65},
  bibsource    = {dblp computer science bibliography, https://dblp.org}
}

@article{Jukna21,
  author       = {Stasys Jukna},
  title        = {Notes on Hazard-Free Circuits},
  journal      = {{SIAM} J. Discret. Math.},
  volume       = {35},
  number       = {2},
  pages        = {770--787},
  year         = {2021},
  url          = {https://doi.org/10.1137/20M1355240},
  doi          = {10.1137/20M1355240},
  bibsource    = {dblp computer science bibliography, https://dblp.org}
}

@misc{Williams12,
  author        = {Ryan Williams and Luca Trevisan},
  title         = {CS154: Automata and Complexity - Notes on Rice’s Theorem},
  day           = {7},
  month         = {February},
  year          = {2012},
  publisher={Stanford University}
}

@ARTICLE{ginosar11tutorial,
author={R. Ginosar},
journal={IEEE Design Test of Computers},
title={{Metastability and Synchronizers: A Tutorial}},
year={2011},
volume={28},
number={5},
pages={23--35},
}

@Article{hu12complexity,
  author	= {W. Hu and J. Oberg and A. Irturk and M. Tiwari and T.
		  Sherwood and D. Mu and R. Kastner},
  journal	= {IEEE Transactions on Information Forensics and Security},
  title		= {{On the Complexity of Generating Gate Level Information
		  Flow Tracking Logic}},
  year		= {2012},
  volume	= {7},
  number	= {3},
  pages		= {1067--1080},
}

@Article{tiwari09flow,
  author	= {Tiwari, Mohit and Wassel, Hassan M.G. and Mazloom, Bita
		  and Mysore, Shashidhar and Chong, Frederic T. and Sherwood,
		  Timothy},
  title		= {{Complete Information Flow Tracking from the Gates Up}},
  journal	= {SIGARCH Comput. Archit. News},
  issue_date	= {March 2009},
  volume	= {37},
  number	= {1},
  year		= {2009},
  pages		= {109--120},
}

@article{BundFM23,
  author       = {Johannes Bund and
                  Matthias F{\"{u}}gger and
                  Moti Medina},
  title        = {{PALS:} Distributed Gradient Clocking on Chip},
  journal      = {{IEEE} Trans. Very Large Scale Integr. Syst.},
  volume       = {31},
  number       = {11},
  pages        = {1740--1753},
  year         = {2023},
  url          = {https://doi.org/10.1109/TVLSI.2023.3311178},
  doi          = {10.1109/TVLSI.2023.3311178},
  bibsource    = {dblp computer science bibliography, https://dblp.org}
}

@inproceedings{BundLM17,
  author       = {Johannes Bund and
                  Christoph Lenzen and
                  Moti Medina},
  editor       = {David Atienza and
                  Giorgio Di Natale},
  title        = {Near-optimal metastability-containing sorting networks},
  booktitle    = {Design, Automation {\&} Test in Europe Conference {\&} Exhibition,
                  {DATE} 2017, Lausanne, Switzerland, March 27-31, 2017},
  pages        = {226--231},
  publisher    = {{IEEE}},
  year         = {2017},
  url          = {https://doi.org/10.23919/DATE.2017.7926987},
  doi          = {10.23919/DATE.2017.7926987},
  bibsource    = {dblp computer science bibliography, https://dblp.org}
}

@article{KomarathS20,
  author       = {Balagopal Komarath and
                  Nitin Saurabh},
  title        = {On the complexity of detecting hazards},
  journal      = {Inf. Process. Lett.},
  volume       = {162},
  pages        = {105980},
  year         = {2020},
  url          = {https://doi.org/10.1016/j.ipl.2020.105980},
  doi          = {10.1016/J.IPL.2020.105980},
  bibsource    = {dblp computer science bibliography, https://dblp.org}
}

@phdthesis{bund2022hazard,
    author = {Bund, Johannes},
    title = {Hazard-free clock synchronization},
    school = {Saarl{\"a}ndische Universit{\"a}ts-und Landesbibliothek},
    year = {2022}
}

\ifnum\arxiv=0
\begin{IEEEbiography}[{\includegraphics[width=1in,height=1.25in,clip,keepaspectratio,trim=5 0 4 0]{./JB.jpg}}]{Johannes~Bund}
is a post-doc researcher at LMF, ENS Paris-Saclay since 2024.
When writing the article he was with the Faculty of Engineering at Bar-Ilan University.
He graduated with his M.\,Sc.\ studies in 2018 at the Saarland Informatics Campus and
  Max-Planck Institute for Informatics.
In 2018 he joined Christoph Lenzen's group at Max-Planck Institute for Informatics as 
  a Ph.\,D.\ student.
In 2021 he joined CISPA Helmholtz Center for Information
  Security, where he finished his Ph.\,D.\ studies in 2022.
Johannes joined Moti Medina at Bar-Ilan University as a post-doc researcher in 2023 before moving to ENS Paris-Saclay.
\end{IEEEbiography}

\begin{IEEEbiography}
[{\includegraphics[width=1in,height=1.25in,clip,keepaspectratio]{./AL.jpg}}]{Amir~Leshem}
(IEEE Fellow) received his B.Sc. (Cum Laude) in mathematics and physics, his M.Sc. (Cum Laude) in mathematics, and his Ph.D. in mathematics all from the Hebrew University, Jerusalem, Israel, in 1986, 1990 and 1998 respectively. From 1998 to 2000 he was postdoctoral researcher at Delft University of Technology, The Netherlands. From 2000 to 2003 he was the director of advanced technologies at Metalink. In 2002, he joined Bar-Ilan University, where he is a full professor.  Prof. Leshem was an associate editor of IEEE Trans. on Signal Processing ( 2008-2011) and IEEE Trans. on Signal and Information Processing over networks (2017-2021). He is an IEEE Fellow for contributions to multi-channel and multi-agent signal processing. 
His main research interests include multi-agent learning over networks, opinion dynamics, applications of game theory to networks, wireless networks, array and statistical signal processing, logic, and the foundations of mathematics.
\end{IEEEbiography}

\begin{IEEEbiography}[{\includegraphics[width=1in,height=1.25in,clip,keepaspectratio]{./MM.jpg}}]{Moti Medina}
is a faculty member in the engineering faculty at Bar-Ilan University since 2021.
Previously he was a faculty member at the School of Electrical \& Computer Engineering at
the Ben-Gurion University of the Negev since 2017. Previously, he was a post-doc
researcher in MPI for Informatics and in the Algorithms and Complexity group at
LIAFA (Paris 7). He graduated with his Ph.\,D., M.\,Sc., and B.\,Sc.\ studies at the
School of Electrical Engineering at Tel-Aviv University, in  2014, 2009, and 2007
respectively. Moti is also a co-author of a  text-book on logic design
``Digital Logic Design: A Rigorous Approach'', Cambridge Univ. Press, 2012.
\end{IEEEbiography}
\fi
\ifnum\supp=0
\ifnum\arxiv=0
\appendices
\else
\appendix
\fi

\section{Kleene Logic and metastability-containing combinational circuits}\label{sec:kleenecomb}
To define the metastability-containing Turing machine, one needs to be able to construct metastability-containing circuits as building blocks. In this appendix, we show how to represent the effect of metastability on combinational circuits using the Kleene ternary logic. For completeness, we also include some basic definitions and a discussion on a classical combinational circuit - the multiplexer.
\color{black}
\subsection{Combinational Circuits}
A \emph{Combinational Circuit} $C$ has $n$ input gates and $m$ output gates and is constructed with $\AND$, $\OR$, and $\NOT$ gates, where $\AND$ and $\OR$ have fan-in of $2$, $\NOT$ gates have a fan-in of $1$, and input gates have fan-in $0$. The out-degree can be any number except the output gates with an out-degree of $0$. A combinational circuit cannot have any cycles, i.e., a combinational circuit is a directed acyclic graph. Each gate computes a Boolean function: an input gate outputs the corresponding input bit, an $\AND$ gate that is fed by input gates that output $x_i, x_j \in \IB$ outputs $\AND(x_i, x_j)$ (similarly for $\OR$ and $\NOT$ gates), and $\AND$ gate that is fed by gates $f_1, f_2 \in \{\AND, \OR, \NOT\}$ that output $y_1, y_2$, respectively, output $\AND(y_1, y_2)$. The values $y_1$ and $y_2$ are computed recursively. Finally, each output gate of $C$ computes the Boolean function computed by the gate that feeds it. The \emph{size} of a combinational circuit $C$ is the number of non-input and output gates and is denoted by $|C|$. The \emph{depth} of a $C$ is the longest path in $C$ from an input gate to an output gate.

From the definition above, every combinational circuit computes a Boolean function $f: \IB^n\rightarrow \IB^m$. On the other hand, it is well known that a combinational circuit can implement every Boolean function $f: \IB^n\rightarrow \IB^m$. As such, combinational circuits can implement \emph{any} transition and output function of a given \TM\ that operates on inputs encoded by bits. 

\subsection{Kleene's ternary logic}\label{sec:kleenelogic}
Kleene’s classical three valued \emph{strong logic of indeterminacy}~\cite[\S64]{Kleene52} captures the issues arising from non-digital
inputs. The two-valued Boolean logic $\IB$ is extended by a third value $\mfu$ representing any unknown, uncertain, undefined, transitioning, or non-binary value. We call both Boolean values \emph{stable}, while $\mfu$ is called \emph{unstable}. Let $\IT=\{0,1,\mfu\}$ denote this set of extended values. 
\begin{figure}
    \centering
    \begin{tikzpicture}[x=.5cm,y=.8cm]
        \node at (0,0) (u) {$\mfu$};
        \node at (-1,-1) (0) {$0$};
        \node at (1,-1) (1) {$1$};
    
        \draw (u) -- (0);
        \draw (u) -- (1);
    \end{tikzpicture}
    \caption{The lattice for $\preceq$.}
    \label{fig:lattice}
\end{figure}
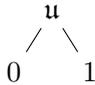

The partial order $\preceq$ over $\IT$ is defined as follows~\cite{M72}: $\mfu \preceq 0$, $\mfu \preceq 1$ and $i\preceq i$ for $i\in\IT$, i.e., $0$ and $1$ do not relate to each other, while $\mfu$ is the minimal w.r.t.\ $\preceq$. Intuitively, $0$ and $1$ are more stable than $\mfu$ or a possible stabilization of $\mfu$. This relation is extended to $\IT^n$, as follows: for $x,y \in \IT^n$, $x \preceq y$ iff for all $i \in [n]$, $x_i \preceq y_i$. 
The lattice for $\preceq$ is depicted in Figure~\ref{fig:lattice}.

In Kleene's ternary logic, the basic operands $\AND$, $\OR$, and $\NOT$ (maybe together with the constants $\{0,1\}$) are extended to their \MC versions, which we denote by $\AND_\mfu, \OR_\mfu$ and $\NOT_\mfu$ (see Table~1 for the full truth tables). Note that, $\AND_\mfu, \OR_\mfu$ and $\NOT_\mfu$ are monotone w.r.t.\ to $\preceq$.
Now, with these extended gates, a combinational circuit implements (following a similar recursive evaluation procedure as in the Boolean case, as illustrated above) a function $f: \IT^n \rightarrow \IT$. Moreover, due to the monotonicity w.r.t.\ to $\preceq$ of $\AND_\mfu, \OR_\mfu$ and $\NOT_\mfu$, any function implemented by a combinational circuit is monotone w.r.t.\ to $\preceq$, that is, unlike the Boolean case, combinational circuits \emph{do not} implement \emph{all} the functions in $\IT^n \rightarrow \IT$~\cite{M72}.
%
\begin{theorem}{\cite[Thm.~3]{M72}}\label{thm:natural}
    A function $f : \IT^n \rightarrow \IT$ can be implemented by a combinational circuit iff
    \begin{itemize}
        \item $f \vert_{\IB^n} : \IB^n \rightarrow \IB$, and 
        \item $f$ is monotone w.r.t.\ $\preceq$,
    \end{itemize}
    we refer to such Boolean functions as  \emph{natural} functions.
\end{theorem}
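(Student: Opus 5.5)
We need to prove Theorem~\ref{thm:natural}: a function $f\colon\IT^n\to\IT$ is implementable by a combinational circuit (with $\AND_\mfu,\OR_\mfu,\NOT_\mfu$ gates) iff $f|_{\IB^n}$ maps into $\IB$ and $f$ is monotone w.r.t.\ $\preceq$. The plan is to prove the two directions separately.

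\emph{Necessity.} We argue by induction on the structure of the circuit, processing gates in topological order.
\begin{itemize}
\item Stability: on a stable input every input gate outputs a value in $\IB$, and each of $\AND_\mfu,\OR_\mfu,\NOT_\mfu$ restricted to $\IB$ is the ordinary Boolean gate (Table~\ref{tab:gates}). Hence every gate, and in particular the output, carries a stable value, so $f(\IB^n)\subseteq\IB$.
\item Monotonicity: the three gates are monotone w.r.t.\ $\preceq$, which can be checked directly from Table~\ref{tab:gates}. Compositions of monotone maps are monotone, so if $x\preceq y$ then every gate value on $x$ is $\preceq$ the corresponding gate value on $y$.
\end{itemize}

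\emph{Sufficiency.} Let $f$ be natural and set $g=f|_{\IB^n}$. The first step is to show that monotonicity forces $f(x)\preceq g_\mfu(x)$. Indeed, for every $y\in\res(x)$ we have $x\preceq y$, hence $f(x)\preceq f(y)=g(y)$. If all the values $g(y)$ agree, $f(x)$ is either that common value or $\mfu$; if they disagree, $f(x)$ must be $\mfu$. So $f$ is determined by $g$ together with the set $S=\{x : f(x)\neq\mfu\}$. This set is upward closed in the $\preceq$ order (if $x\preceq x'$ and $f(x)\neq\mfu$, then $f(x')\neq\mfu$), and on it $f$ agrees with $g_\mfu$, taking a stable value.

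The construction then builds a circuit that outputs $1$ exactly on $S_1=\{x:f(x)=1\}$, outputs $0$ exactly on $S_0=\{x:f(x)=0\}$, and outputs $\mfu$ otherwise. For each minimal element $a\in S_1$, i.e.\ a ternary cube, take the term
\[
t_a=\textstyle\bigwedge_{a_i=1}x_i\wedge\bigwedge_{a_i=0}\neg x_i .
\]
Evaluated in Kleene logic, $t_a(x)=1$ iff $x\succeq a$ on the fixed coordinates, and $t_a(x)=0$ iff some fixed coordinate of $x$ holds the opposite stable value. Let $C=\bigvee_{a\in\min S_1}t_a$. The required properties are:
\begin{itemize}
\item $C(x)=1$ iff some $t_a(x)=1$. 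Since $S_1$ is upward closed, this happens iff $x\in S_1$.
\item $C(x)=0$ iff every $t_a(x)=0$, i.e.\ iff $x$ conflicts with every implicant cube of $g$ contained in $S_1$.
\end{itemize}

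The main obstacle is the second property: we must show that all terms vanish exactly when $f(x)=0$. Suppose all terms vanish. Then every resolution of $x$ lies outside $S_1$, so $g(y)=0$ for all $y\in\res(x)$, since stable points of $S_1$ are exactly $g^{-1}(1)$. It remains to conclude $f(x)=0$ rather than $\mfu$, and here pure monotonicity does not suffice: monotonicity alone allows $f(x)=\mfu$ even when $g_\mfu(x)=0$.

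To handle this, the plan is to modify the construction by combining two circuits: $C$ for $S_1$ as above, and $D$, the analogous circuit built from the minimal elements of $S_0$, which Kleene-evaluates to $1$ exactly on $S_0$. The output is
\[
\OR_\mfu\bigl(\AND_\mfu(C,\NOT_\mfu(D)),\,\AND_\mfu(\mfu\text{-free guard})\bigr),
\]
more simply $\AND_\mfu(C',\NOT_\mfu(D'))$ with suitable forms. The key case analysis shows the combined circuit yields $1$ on $S_1$, $0$ on $S_0$, and $\mfu$ elsewhere. This case check on the Kleene truth tables, in particular ensuring that $\mfu$ appears off $S_0\cup S_1$ while stability is preserved on $S_0\cup S_1$, is the step expected to need the most care. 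Where the case check becomes intricate, the intended fallback is to appeal to \cite[Thm.~3]{M72} itself.
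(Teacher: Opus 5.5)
\textbf{There is a genuine gap in the sufficiency (``if'') direction.}

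Your necessity direction is fine. So is your analysis of $C$: it is $1$ exactly on $S_1$. You also correctly observe that $C$ alone outputs $0$ at points where $f(x)=\mfu$ although every resolution of $x$ lies in $g^{-1}(0)$. But the sufficiency proof stops exactly there. The displayed output formula contains an unspecified ``$\mfu$-free guard''. The only concrete candidate, $\AND_\mfu(C,\NOT_\mfu(D))$, fails at precisely those points, because $C(x)=0$ forces the $\AND_\mfu$ to output $0$. A minimal example is $n=1$ with $f(0)=f(1)=0$ and $f(\mfu)=\mfu$; this $f$ is natural and is computed by $\AND_\mfu(x,\NOT_\mfu(x))$. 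Here $S_1=\emptyset$ and $C\equiv 0$, so $\AND_\mfu(C,\NOT_\mfu(D))\equiv 0\neq f(\mfu)$. Falling back on \cite[Thm.~3]{M72} is circular, since that is the statement to be proved; the paper gives no proof of its own and only cites Mukaidono. So the ``if'' direction is not established.

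The missing idea is a gadget that \emph{creates} $\mfu$ where all resolutions agree. In Kleene logic this requires a signal and its complement inside one product. A standard fix: for each $a$ with $f(a)=\mfu$ (necessarily $a\notin\IB^n$), add the term (with $\wedge$ meaning $\AND_\mfu$)
\[
u_a=\bigwedge_{a_i=1}x_i\wedge\bigwedge_{a_i=0}\NOT_\mfu(x_i)\wedge\bigwedge_{a_i=\mfu}\AND_\mfu\bigl(x_i,\NOT_\mfu(x_i)\bigr).
\]
It satisfies $u_a(y)=\mfu$ if $y\preceq a$ and $u_a(y)=0$ otherwise, and monotonicity gives $f(y)=\mfu$ whenever $y\preceq a$. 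Moreover, $C(x)=\mfu$ forces $f(x)=\mfu$: a stable common upper bound $z$ of $x$ and some $a\in S_1$ has $f(z)=1$, so $f(x)\neq 0$. Hence $f=\OR_\mfu\bigl(C,\bigvee_a u_a\bigr)$.

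Alternatively, your two-circuit route can be completed. One has $(C,D)=(1,0)$ on $S_1$ and $(0,1)$ on $S_0$. On $\{f=\mfu\}$ the pair lies in $\{(0,\mfu),(\mfu,0),(\mfu,\mfu)\}$: neither is $1$, and not both are $0$ because some stable resolution of $x$ lies in $S_0\cup S_1$. Then $\AND_\mfu\bigl(\NOT_\mfu(D),\OR_\mfu(C,\NOT_\mfu(C))\bigr)$ outputs $1$, $0$, $\mfu$ in the respective cases. Note that $\OR_\mfu(C,\NOT_\mfu(C))$ is again the complementary-pair trick.

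Finally, constant gates must be allowed; your empty disjunction for $S_1=\emptyset$ is already the constant $0$. A constant-free circuit outputs $\mfu$ on $\mfu^n$, so the natural constant functions would otherwise not be implementable.
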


\subsection{Implementability of the \MC\ of Boolean Functions by Family of Combinational Circuits}\label{sec:mcimple}
We now argue that the \MC\ of Boolean functions with a single output, i.e., $f\colon\IB^*\rightarrow\IB$, is implementable by families of combinational circuits. In turn, this allows defining a subclass of hardware implementable \TM\, i.e., implemented using gates, wires, and memory registers. The implementation of these gates is discussed above. 

Naturally, considering inputs of length $n$ and restricting $f_\mfu$ to binary inputs, yields a Boolean function: $f_\mfu \vert_{\IB^n} = f \vert_{\IB^n} \colon\IB^n\rightarrow\IB$.
Moreover, for a given input length, the \MC\ of $f$, $f_\mfu$, is monotone w.r.t.\ $\preceq$, as follows. Consider $x \preceq y$, then $\res(y) \subseteq \res(x)$. Hence, $f(\res(y)) \subseteq f(\res(x))$. Monotonicity of $f_\mfu$ follows from the definition of the superposition (see Definition\ifnum\supp=1 ~\textbf{III.3}\else ~\ref{def:closure}\fi). 
Hence, by Theorem~\ref{thm:natural}, the \MC of $f\colon\IB^*\rightarrow\IB$, can be implemented by a family of combinational circuits, i.e.,  the \MC of $f$ restricted to inputs of length $n$ can be implemented by a circuit for each $n\in\IN$. 

Given a Boolean function $f$, its \MC $f_{\mfu}$ is unique and maximal w.r.t.\ $\preceq$~\cite{friedrichs18containing, IkenmeyerKLLMS19}, as such, combinational circuits that implement their \MC are desirable since their output is the most stable one possible.~\footnote{In the literature circuits that compute the \MC are sometimes referred to to as 
\emph{hazard-free} circuits}
%

\paragraph{M-Containing Multiplexer\ifnum\arxiv=1 .\fi}
A commonly used tool in the study of M-Containing circuits is the multiplexer.
The $n$ to $1$ multiplexer is a circuit that selects $1$ out of $n$ bits 
    according to a select input.
The metastability-containing multiplexer is a multiplexer that has a correct output for any input in $\IT^n$. Common implementations for basic multiplexers and extension (by recursive implementation) to  $\NCMUX$ are presented, e.g., in~\cite{friedrichs18containing,IkenmeyerKLLMS19,IkenmeyerKS23}.
\ifnum\supp=0
\begin{definition}[M-Containing Multiplexer (\CMUX)]\label{def:cmux}
    Given $\ell\in\IN_{>0}$, such that $n=2^\ell$;
        an $n$ to $1$ \emph{multiplexer} is given by the Boolean function  
        $\NMUX\colon\IB^n\times\IB^\ell\rightarrow\IB$. 
    Let $x\in\IB^n$ be a string of input bits and $s\in\IB^\ell$ be the select input, then
    \begin{align*}
        \NMUX(x,s)\coloneqq x_p \text{, and } p \coloneqq {\langle s\rangle_B}\,,
    \end{align*}
    where $\langle s \rangle_B \in \IN$ denotes the number encoded (in binary encoding) by $s$,
    and $x_p$ denotes the bit at position $p$ in $x$.
    
    The $n$ to $1$ \emph{M-Containing Multiplexer ($\CMUX$)} is given by the function 
        $\NCMUX\colon\IT^n\times\IT^\ell\rightarrow\IT$.
    It is defined by the M-Containing of the $n$ to $1$ multiplexer;
    \begin{align*}
        \NCMUX(x,s) \coloneqq \left(\NMUX(x,s)\right)_\mfu\,,
    \end{align*}
    where $x\in\IT^n$ and $s\in\IT^\ell$.
\end{definition}
\fi

\fi
\vfill

\end{document}